\documentclass[11pt,onecolumn,draftcls]{IEEEtran}

\usepackage{enumerate}
\usepackage{cite}
\usepackage{amsmath}
\usepackage{amsthm}
\usepackage{graphicx}
\usepackage{amssymb}
\usepackage{subfig}
\usepackage{bm,bbm}
\newtheorem{lem}{Lemma}
\newtheorem{thm}{Theorem}
\newtheorem{deft}{Definition}

\newtheorem{rem}{Remark}

\AtBeginDocument{}%
\usepackage{hyperref}

\newcommand{\nozero}{\backslash \left\{\mathbf{0} \right\}}
\newcommand{\I}{\mathbb{I}}

\newcommand{\E}{\mathbb{E}}
\newcommand{\XX}{\mathbf{X}}
\newcommand{\WW}{\mathbf{W}}

\newcommand{\xx}{\mathbf{x}}
\newcommand{\ww}{\mathbf{w}}

\newcommand{\yy}{\mathbf{y}}

\newcommand{\snr}{\text{SNR}}
\newcommand{\eff}{\text{\upshape{eff}}}
\newcommand{\norm}[1]{\left\| #1 \right\|}

\newcommand{\lat}{\Lambda}
\newcommand{\Vor}{\mathcal{V}}
\renewcommand{\vec}[1]{\mathbf{#1}}
\newcommand{\R}{\mathbb{R}}
\newcommand{\pr}[2]{\langle #1, #2 \rangle}
\newcommand{\set}[1]{\left\{#1\right\}}
\newcommand{\eps}{\varepsilon}
\newcommand{\inver}{{\rm err}^{-1}}

\newif\ifnotes\notesfalse

\ifnotes
\usepackage{color}
\definecolor{mygrey}{gray}{0.50}
\newcommand{\notename}[2]{{\textcolor{red}{\footnotesize{\bf (#1:} {#2}{\bf ) }}}}

\newcommand{\anote}[1]{{\notename{Antonio}{#1}}}
\newcommand{\cnote}[1]{{\notename{Cong}{#1}}}
\newcommand{\danote}[1]{{\notename{Daniel}{#1}}}
\newcommand{\dinote}[1]{{\notename{Divesh}{#1}}}

\else

\newcommand{\notename}[2]{{}}

\newcommand{\anote}[1]{}
\newcommand{\cnote}[1]{}
\newcommand{\danote}[1]{}
\newcommand{\dinote}[1]{}

\fi

\begin{document}

\title{\textsc AWGN-Goodness is Enough: Capacity-Achieving Lattice Codes based on Dithered Probabilistic Shaping}

{\author{Antonio Campello, \textit{Member, IEEE}, Daniel Dadush, and Cong Ling, \textit{Member, IEEE}\thanks{
The authors are listed in alphabetical order.

A. Campello and C. Ling are with the Department of Electrical and Electronic Engineering, Imperial College London, UK (e-mail: a.campello@imperial.ac.uk, cling@ieee.org).

D. Dadush is  with  the  Centrum Wiskunde \& Informatika, Netherlands (e-mail: dadush@cwi.nl).

}}

\maketitle
\begin{abstract}
In this paper we show that any sequence of infinite lattice constellations which is good for the unconstrained Gaussian channel can be shaped into a capacity-achieving sequence of codes for the power-constrained Gaussian channel under  lattice decoding and non-uniform signalling. Unlike previous results in the literature, our scheme holds with no extra condition on the lattices (e.g. quantization-goodness or vanishing flatness factor), thus establishing a direct implication between AWGN-goodness, in the sense of Poltyrev, and capacity-achieving codes. Our analysis uses properties of the discrete Gaussian distribution in order to obtain precise bounds on the probability of error and achievable rates. In particular, we obtain a simple characterization of the finite-blocklength behavior of the scheme, showing that it approaches the optimal dispersion coefficient for \textit{high} signal-to-noise ratio. We further show that for \textit{low} signal-to-noise ratio the discrete Gaussian over centered lattice constellations cannot achieve capacity, and thus a shift (or ``dither'') is essentially necessary.
\end{abstract}

\section{Introduction}
Coded modulation schemes for the Gaussian channel can be usually constructed from infinite constellations (ICs) in $\mathbb{R}^n$, shaped according to the power constraint of the channel. In order to analyse such infinite constellations independently of the power, Poltyrev \cite{Poltyrev94} defined the notion of codes which are good for the unconstrained Gaussian channel. In this setting, a code is an infinite discrete subset of $\mathbb{R}^n$, and any point can be transmitted. Since the usual code rate is infinite in this case, the optimal \textit{normalized log density} (NLD) replaces the notion of capacity. The NLD measures the logarithm, per dimension, of the number of points of an IC per unit of volume. An optimal sequence of ICs with vanishing probability of error is called \textit{AWGN-good}, and corresponds to the ``most economic'' constellations that can achieve reliable communication. The most popular ICs are \textit{lattices}, since their symmetries allow for construction of efficient encoding/decoding schemes. Since the work of Poltyrev in 1994, the notion of AWGN-goodness has become an important widely used benchmark and building block for the construction of lattice codes.

Intuitively, AWGN-good ICs should be able to produce capacity-achieving codes
in the power-constrained setting, using nearest neighbor decoding and a
carefully chosen shaping technique. Nevertheless, all known schemes in the
literature that convert lattices into codes for the constrained Gaussian channel
under \textit{lattice decoding} entail some additional property. For instance, Erez and
Zamir \cite{EZ04} proved that an AWGN-good sequence of lattices can be converted
into capacity-achieving codes for the Gaussian channel, \textit{provided that it
can be nested to another lattice sequence} which is also AWGN-good and has
optimal normalized second-moment (i.e., quantization-good). 
Recently, Ling and Belfiore \cite{LB14} have
shown that a sequence of lattices along with probabilistic shaping can achieve
the capacity of the Gaussian channel above a certain signal-to-noise ratio (SNR) \anote{$\snr = e$ is not really high...}
\textit{provided that it has vanishing flatness factor}, which relates to how quickly a
Gaussian random vector becomes equidistributed over cosets of the lattice as its
standard deviation increases. 

The objective of this paper is to show that indeed AWGN-goodness is a
\textit{sufficient} property for building capacity-achieving lattices in the Gaussian channel, with no extra condition. In order to do so, we employ a technique based on non-equiprobable signalling using discrete Gaussian distributions centered at a (non-zero) randomly generated point in $\mathbb{R}^n$. Following \cite{PZ12}, we call such technique \textit{dithered probabilistic shaping} (DPS). From a practical ``separation'' point of view, this means that the design problem of good lattices for the AWGN channel can be completely \textit{decoupled}: one can focus entirely on the design problem for the unconstrained channel, which can be then coupled with \textit{plug-and-play} DPS techniques into a good code for the constrained channel.


\subsection{Main Result}
\begin{figure}[!ht]
	\centering
	\subfloat[A one-dimensional discrete Gaussian with support in $\mathbb{Z}+0.3$]{\includegraphics[scale=0.5]{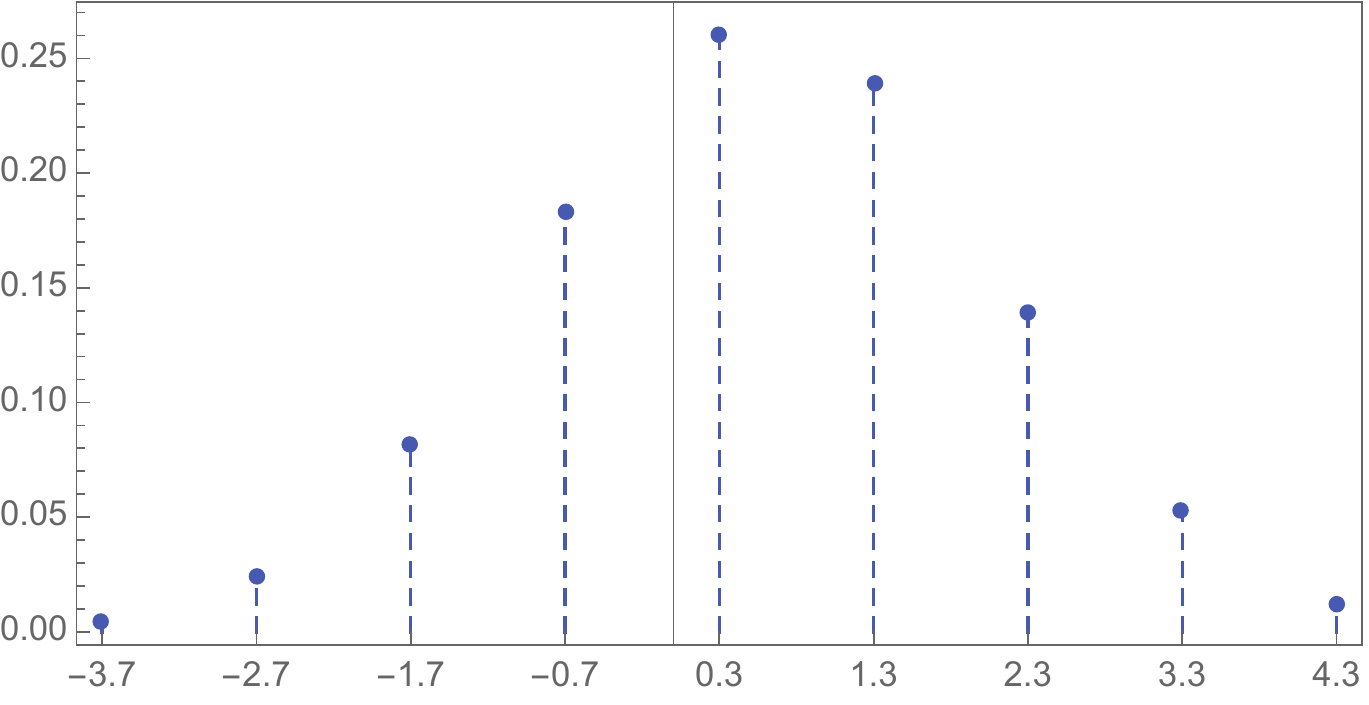}} \hfill
	\subfloat[A two-dimensional centered discrete Gaussian with support in a hexagonal lattice]{\includegraphics[scale=0.6]{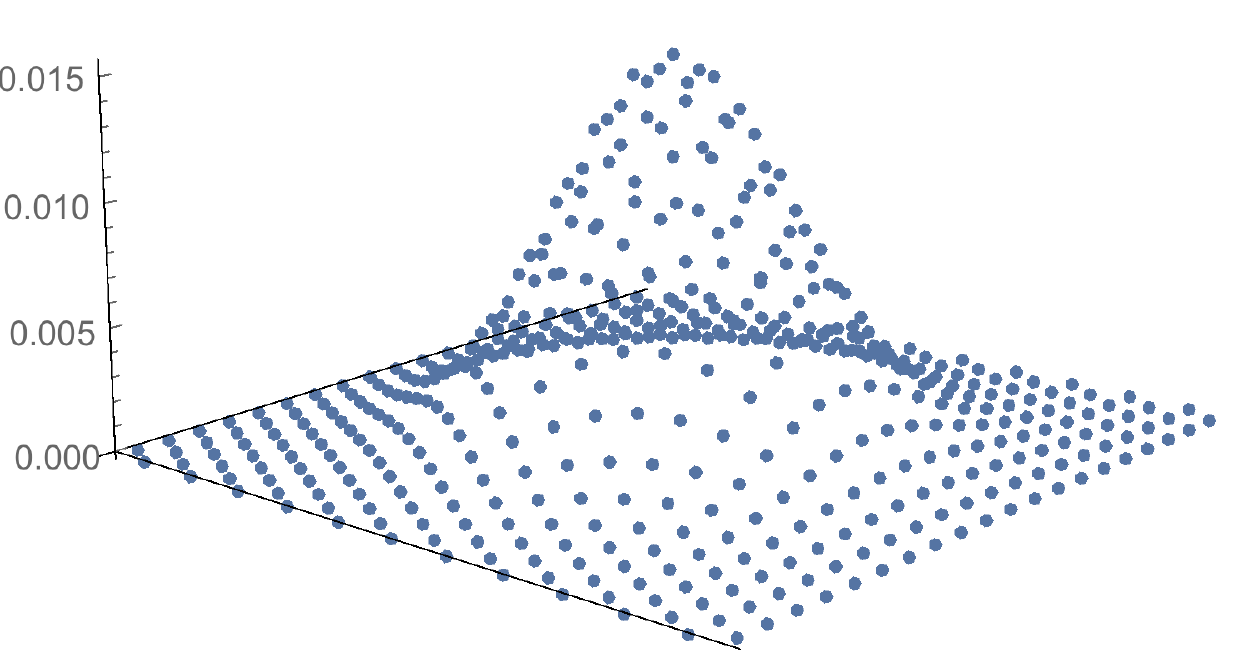}}
	\caption{Illustration of the discrete Gaussian distribution}
	\label{fig:unique}
\end{figure}
 Let $\vec{W} \sim \mathcal{N}(\vec{0}, \sigma_w^2)$ be an $n$-dimensional Gaussian random vector. A sequence of \textit{AWGN-good} lattices $(\lat_n)_{n=1}^{\infty}$ with volume $V(\lat_n)$ is such that the probability of error (i.e. the probability that $\vec{W}$ leaves the Voronoi cell of $\lat_n$) vanishes for any NLD
\begin{equation} \delta(\lat_n) \triangleq -\frac{1}{n} \log V(\lat_n) < -\frac{1}{2} \log (2 \pi e \sigma_w^2) \triangleq \delta^*.
\label{eq:poltyrevLimit}
\end{equation}
The discrete Gaussian distribution $\mathcal{D}_{\lat+\mathbf{t},\sigma_s^2}$ is the distribution taking values in $\lat+\mathbf{t}$ whose mass of $\xx + \mathbf{t}$ is proportional to $e^{-\left\| \xx + \mathbf{t}\right\| /2\sigma_s^2}$ (see Figure \ref{fig:unique}). This distribution has finite power and can thus be used in the transmission over a Gaussian channel with average power constraint $P$. In a lattice Gaussian coding scheme, the sent signal is chosen according to $\mathcal{D}_{\lat+\mathbf{t},\sigma_s}$, while the received points are suitably scaled and then decoded using lattice decoding.

Let $(\lat_n)_{n=1}^\infty$ be a sequence of AWGN-good lattices. Our main result states the existence of a sequence of shift vectors (or ``dithers") $(\mathbf{t}_n)_{n=1}^\infty$ such that the lattice coding scheme with distribution $\mathcal{D}_{\lat_n+\mathbf{t}_n,\sqrt{P}}$ is capacity-achieving in the Gaussian channel.
A more quantitative statement can be found in Theorem
\ref{thm:awgn-good-enough}, along with an analysis of the rate of convergence to
capacity. For instance, we show that the existence of $\mathbf{t}_n$ holds with constant probability and, using \cite{IZF13}, that the gap to
capacity has order $\approx (2n)^{-1/2}(Q^{-1}(\varepsilon) + \sqrt{8})$ where
$Q^{-1}$ is the inverse-error function. Apart for the $\sqrt{8}$ term (which we
believe is an artifact of our analysis), this corresponds to the optimal
convergence behavior for the Gaussian channel for high SNR \cite{PPV09}.
In particular, as $\varepsilon \to 0$, the optimal dispersion is approached at high SNR.

A key point for the proof of the main results is the \textit{sampling lemma} (Lemma \ref{lem:sampling}), that says that the dithered lattice Gaussian is equivalent to a continuous Gaussian. Since continuous dithers would require sharing an infinite number of random bits, we then demonstrate the sufficiency of discrete dithering in Lemma \ref{lem:discrete-sampling-lemma}.
As a further point of interest, we observe using~\cite{CDLP13} that the coding
gain of a lattice is closely related to the so-called \emph{smoothing parameter} of the dual lattice.  At a high level, the discrete Gaussian over dual lattice
$\lat^*$ sampled above the smoothing parameter ``looks like'' a
continuous Gaussian, which justifies the name. The study of this parameter has
played a fundamental role in our understanding of the discrete Gaussian, and has
led to many important developments such as tight transference theorems in the
geometry of numbers~\cite{Banaszczyk93}, new lattice based cryptographic
schemes~\cite{MR04,Regev05,GPV08}, and the recent development of the so-called
reverse Minkowski inequality~\cite{DR16,RS17}. This relation, discussed in
Section~\ref{sec:discussion}, gives an alternative viewpoint for quantifying
AWGN-goodness that we
hope will find future use.


\subsection{To Dither or Not to Dither}
Since the seminal work \cite{EZ04}, nested lattice constellations are known to
be capacity achieving in the Gaussian channel with lattice decoding,
provided that the constellations are suitably shifted by a vector known both at the transmitter and at the receiver. The process of randomizing the choice of the shift vector, known as \textit{dithering}, greatly facilitates the analysis but may introduce additional design complexity. An intriguing open
problem is whether dithering is indeed necessary. For Voronoi constellations (under MMSE scaled lattice
decoding), this problem is considered in \cite{YH17,ZamirBook}, where the
necessity of a dither is argued\anote{argued?  shown? discussed?}\danote{shown I
would say, the note was pretty convincing, but I guess qualify that it's for the
``standard'' scheme somehow} for low SNR.

Recently, Ling and Belfiore \cite{LB14} have shown that undithered lattice
coding, with probabilistic shaping according to a discrete Gaussian
distribution, is capacity achieving for a threshold\footnote{In fact, the arguments in \cite{LB14} can be slightly improved in
order to reduce the threshold to $\text{SNR} > e-1$} SNR greater than $e$. In another work \cite{PietroZB18}, di Pietro, Zemor and Boutros showed that LDA lattices without dithering achieve capacity on the {Gaussian} channel if SNR $>1$. In
Section \ref{sec:converse} of the present paper, we show that (a non-zero) dither is indeed necessary in the low SNR regime. Specifically we show (Theorem \ref{thm:converse}) that if a sequence of lattices is shaped according to a \textit{centered} Gaussian distribution with variance parameter equals to the power constraint of the channel, it fails to achieve any positive rates for $\text{SNR} < 1$. \anote{I moved a sentence from the first section to here, to avoid redundancy} Interestingly, in the low SNR regime our scheme uses properties of the
discrete Gaussian when the flatness factor is large (or ``below smoothing'', in the computer science jargon),
a setting where previous analytical techniques break down. As it turns out, in this regime, dithering is not only a matter of simplification but essentially necessary. It is noting that although polar lattices \cite{YanLLW14} can achieve capacity for all SNRs, their construction employ randomly generated bits which, ultimately, play the role of a dither.

\subsection{Related Works}
In comparison to previous works on nested lattice constellations \cite{EZ04}, approaches based on discrete Gaussians do not rely on a quantization-good sub-lattice for shaping. It is worth noting that, since  \cite{EZ04}, there has been a significant effort employed in simplifying nested-lattice constellations. For instance, \cite{OE16} removes the assumption that the shaping sub-lattice needs to be AWGN-good and \cite{QFH17} by-passes the quantization-good by using a discrete dither and removing the constellation points with large power. Relying on the machinery of the discrete Gaussian distribution, our approach removes the reliance on a sub-lattice with suitable properties, and greatly simplifies the proof that lattice codes are capacity-achieving in the Gaussian channel (our full proof is self-contained in Section \ref{sec:AWGN-good}).

Our main result strongly uses non-uniform signalling, which could be implemented with probabilistic shaping techniques. In a broader scope, besides the theoretically appealing properties of probabilistic shaping for Gaussian channels \cite{Frank93},\cite{LB14}, the techniques became  popular in the last years due to their prospective applications to non-linear optical-fiber transmissions \cite{Fehenberger16},  \cite{SK12}. The idea of \textit{dithered probabilistic shaping} (DPS) was previously considered in \cite{PZ12}, who proposed a method to achieve high shaping gains of a constellation with small dimension and low complexity.

Besides showing the sufficiency of the AWGN goodness figure of merit, our analysis closes the gap of uniform signalling \cite{LB14} for low SNR. Although this is not the usual regime for wireless channels, it has attracted a recent interest due to secrecy applications in low profile covert communications \cite{WWZ16,Bloch16,Bash13}, where signals are transmitted with very low power. Understanding how probabilistic shaping schemes behave for low SNR might find use in such scenarios. \anote{Removed the former paragraph, and added something about applications to IT}In addition to AWGN channel coding, there is an increasing literature in the use of the discrete Gaussian in other information-theoretic scenarios where our simplified approach could play a role. A few examples are Gaussian wiretap channels \cite{LLBS_12}, fading wiretap channels \cite{LLV16}, and compound channels \cite{CLB16}.


%

\section{Notation and Preliminary Results}
We consider a real-valued AWGN channel with average-power constraint $P$ and noise variance $\sigma_w^2$. Denote the received signal of the channel, given input $\xx \in \mathbb{R}^n$, by $\yy = \xx + \ww$, where $\ww$ is drawn from the distribution $\mathcal{N}(0,\sigma_w^2 \mathbf{I}_n)$.
A (full-rank) lattice $\Lambda \subset \mathbb{R}^n$ is a discrete subgroup not contained in any proper subspace of $\mathbb{R}^n$.

\begin{deft} A lattice code for the Gaussian channel consists of
	\begin{enumerate}
		\item A lattice $\Lambda \subset \mathbb{R}^n$.
			\item A finite-entropy probability distribution $\mathcal{D}$ taking values in $\Lambda$ such that, for $\XX \sim \mathcal{D}, E[\left\| \XX \right\|^2] \leq nP.$
			\item A decoding function $g:\mathbb{R}^n \to \Lambda$.
	\end{enumerate}
	\label{def:latticeCode}
\end{deft}
The error probability of a lattice code is $P_e(\Lambda) \triangleq \Pr(g(\mathbf{X}+\mathbf{W})\neq \mathbf{X})$, where $\mathbf{W} \sim \mathcal{N}(0,\sigma_w^2 \mathbf{I}_n)$. The \textit{rate} of the code is given by $R = (1/n) \mathbb{H}(\XX),$
where $$\mathbb{H}(\XX) \triangleq -\sum_{\xx \in \Lambda} \Pr(\XX=\xx) \log \Pr(\XX=\xx)$$ is the entropy of distribution $\mathcal{D}$ (all logarithms are with respect to base $e$ and rates are calculated in \textit{nats}). We can similarly define lattice codes for a translation $\Lambda + \mathbf{t}$, where $\mathbf{t} \notin \Lambda$, with the obvious modifications.

If $\mathcal{D}$ is the uniform distribution over the points of $\Lambda$ in a compact set $\mathcal{S}$ (e.g. a ball or the Voronoi region of a sub-lattice), and zero otherwise, this corresponds to the classic ``deterministic'' shaping. In this case, a set of possible uniformly distributed messages $\left\{1, 2, \ldots, e^{nR} \right\}$ can be mapped into the signal space by simply labeling the points in $\Lambda \cap \mathcal{S}$. When $\mathcal{D}$ is non-uniform, messages can be mapped into the signal space by means of \textit{probabilistic shaping} techniques (e.g. \cite{PZ12}, \cite[Sec IV]{LB14}\cite{Bocherer14}, \cite[Sec 6.5]{ZamirBook}).

Let $\text{SNR} \triangleq P/\sigma_w^2$ and $C(\text{SNR}) \triangleq 1/2 \log(1+\text{SNR})$.
\begin{deft}
	A sequence of lattice codes of increasing dimension $\Lambda_n$ with distributions $\mathcal{D}_n$ is said to be capacity-achieving in the Gaussian channel with $\text{\upshape{SNR}} = P/\sigma_w^2$ if $P_e(\Lambda_n) \to 0$ and
	\begin{equation}
	\lim_{n\to\infty} \frac{1}{n}\mathbb{H}(\XX_n) = C(\text{\upshape{SNR}}),
	\end{equation}
	where $\XX_n \sim \mathcal{D}_n$.
\end{deft}

  \begin{deft}
 	Let
 	$$f_{\sigma}(\xx) = \frac{e^{-\left\|\mathbf{x}\right\|^2/2\sigma^2}}{(\sqrt{2\pi\sigma^2})^n}.$$
 	 The \textit{discrete Gaussian distribution} $\mathcal{D}_{\Lambda+\mathbf{t},\sigma}$ is the discrete distribution assuming values in $\Lambda+\mathbf{t}$ such that the mass of each point $\xx+\mathbf{t} \in \Lambda+\mathbf{t}$ is proportional to $f_{\sigma }(\xx+\mathbf{t})$.
 \end{deft}

 If $\mathcal{D}$ in Definition \ref{def:latticeCode} is a discrete Gaussian distribution assuming values in $\Lambda + \vec{t}$, the corresponding code is called a \textit{lattice Gaussian code}. Some discrete Gaussians are illustrated in Figure \ref{fig:unique}. If a point in a lattice Gaussian code is transmitted over a Gaussian channel, we have the following fact:
\begin{lem}[Equivalence between MAP and lattice decoder \cite{LB14}] Let
 $\hat{\xx}$ be the output of the maximum-a-posteriori decoder for a lattice Gaussian code with distribution $\mathcal{D}_{\Lambda+\mathbf{t},\sigma_s^2}$. We have
	\begin{equation*}
	\hat{\xx} = \displaystyle \arg\min_{\xx \in \Lambda + \mathbf{t} }  \left\| \alpha \yy - \xx \right\|,
	\end{equation*}
	where $\alpha = \sigma_s^2 /(\sigma_s^2+\sigma_w^2)$ is called the \text{\upshape{MMSE}} (or Wiener) coefficient.
	\label{lem:MAP}
\end{lem}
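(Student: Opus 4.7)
The plan is to directly expand the MAP decoding rule and show, via completing the square, that the posterior objective reduces to a nearest-point search in $\Lambda+\mathbf{t}$ centered at $\alpha\yy$. Since the channel is $\yy=\xx+\ww$ with $\ww\sim\mathcal{N}(0,\sigma_w^2 \mathbf{I}_n)$ and the prior on $\xx$ is $\mathcal{D}_{\lat+\mathbf{t},\sigma_s^2}$, the MAP estimate is
\begin{equation*}
\hat{\xx}=\arg\max_{\xx\in\lat+\mathbf{t}} p(\yy\mid\xx)\Pr(\XX=\xx)\propto \arg\max_{\xx\in\lat+\mathbf{t}} \exp\!\left(-\frac{\|\yy-\xx\|^2}{2\sigma_w^2}-\frac{\|\xx\|^2}{2\sigma_s^2}\right).
\end{equation*}

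Next I would pass to the negative log-posterior and discard the term $\|\yy\|^2/(2\sigma_w^2)$, which does not depend on $\xx$. What remains is
\begin{equation*}
\Phi(\xx)=\frac{1}{2}\left(\frac{1}{\sigma_w^2}+\frac{1}{\sigma_s^2}\right)\|\xx\|^2-\frac{1}{\sigma_w^2}\pr{\yy}{\xx}.
\end{equation*}
Setting $\beta=1/\sigma_w^2+1/\sigma_s^2=(\sigma_w^2+\sigma_s^2)/(\sigma_w^2\sigma_s^2)$ and completing the square gives
\begin{equation*}
\Phi(\xx)=\frac{\beta}{2}\left\|\xx-\frac{1}{\beta\sigma_w^2}\yy\right\|^2-\frac{1}{2\beta\sigma_w^4}\|\yy\|^2,
\end{equation*}
so minimizing $\Phi$ is equivalent to minimizing $\|\xx-(1/(\beta\sigma_w^2))\yy\|$ over $\xx\in\lat+\mathbf{t}$.

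Finally I would simplify the coefficient: $1/(\beta\sigma_w^2)=\sigma_s^2/(\sigma_w^2+\sigma_s^2)=\alpha$, giving $\hat{\xx}=\arg\min_{\xx\in\lat+\mathbf{t}}\|\alpha\yy-\xx\|$, as claimed. There is no real obstacle in this proof; the only thing to be careful about is to track the constants so that the MMSE (Wiener) coefficient $\alpha$ emerges naturally from completing the square, and to note that the argument is identical for the coset $\lat+\mathbf{t}$ since translating the summation variable only shifts $\xx$ over the same countable set.
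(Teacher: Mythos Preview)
Your argument is correct: the MAP objective is a sum of two quadratic exponents, and completing the square collapses it to a nearest-lattice-point problem centered at $\alpha\yy$ with $\alpha=\sigma_s^2/(\sigma_s^2+\sigma_w^2)$. The paper itself does not prove this lemma; it simply quotes it from \cite{LB14}, where the same completing-the-square derivation is carried out, so your approach coincides with the original source.
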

Therefore, the optimal MAP decoder is obtained by MMSE pre-processing followed by lattice decoding. Let $\vec{W}_{\eff} = (\alpha-1) \vec{X} + \alpha \vec{W}$ be the effective noise in this process.
For an $n$-dimensional lattice $\lat \subset \R^n$, we define the Voronoi cell
$\Vor(\lat)$ of $\lat$ by
\[
\Vor(\lat) = \set{\vec{x} \in \R^n: \pr{\vec{x}}{\vec{y}} \leq \|\vec{y}\|^2/2 \text{, for any } \yy \in \Lambda}
\]
namely, the set of all points closer to the origin than any other lattice point. The volume of $\Lambda$, denoted by $V(\Lambda)$, corresponds to the volume of the Voronoi cell $\mathcal{V}(\Lambda)$. From Lemma \ref{lem:MAP}, the probability of error of the maximum-a-posteriori decoder is
$$P_{e}(\Lambda) = \Pr(\vec{W}_{\eff} \notin \mathcal{V}(\Lambda)).$$
We denote by $\mathbf{x}  \mbox{ mod} \Lambda$ the unique representative of $\mathbf{x} \in \mathbb{R}^n$ in the voronoi cell of $\Lambda$, with ties broken arbitrarily.

\section{Capacity Achieving Codes From AWGN-Good Lattices}
\label{sec:AWGN-good}
In this section, we will show that any AWGN-good lattice $\lat$ can be used to
construct a capacity achieving code for the AWGN-channel. The caveat will be the
use of a random dither as well as the use of non-equiprobable encoding, namely discrete
Gaussian encoding. We will later argue in in Section \ref{sec:discussion} that a discrete dither suffices and in Section \ref{sec:converse} that the dither is essentially necessary for this construction to hold for all SNR.

%
%
%


Let us now examine an ensemble of coding schemes in $\R^n$ based on a
$n$-dimensional lattice $\lat \subset \R^n$. Define the inverse error function
of $\lat$ by
\begin{equation}
\inver_\eps(\lat) = \min \set{s: \Pr_{\vec{x} \sim
\mathcal{N}(\vec{0},\mathbf{I}_n)}[x \notin s\Vor(\lat)] \leq \eps} \text{ .}
\label{eq:errInv}	
\end{equation}
Note that for a AWGN-good family $(\Lambda_n)_{n=1}^\infty$, $V(\Lambda_n) =
1 ~ \forall i \in [n]$, for any fixed $\eps > 0$ we have that $\lim_{n
\rightarrow \infty} \inver_\eps(\lat) \rightarrow \sqrt{2\pi e}$. We also define the
\textit{normalized volume-to-noise ratio} (NVNR) of lattice at probability of error $\eps$ as:
\begin{equation} \mu(\Lambda,\eps) =
{\inver_\eps(\lat)^2 V(\lat)^{2/n}}.
\label{eq:NVNR}
\end{equation}
This definition indeed corresponds to the usual NVNR (e.g. \cite[Def. 3.3.3]{ZamirBook}), rephrased in light of the error function \eqref{eq:errInv}. For convenience, we further normalize $\mu(\Lambda,\eps)$ by $2\pi e$ and define
\begin{equation} \gamma(\Lambda,\eps) = \frac{\mu(\Lambda,\eps)}{2\pi e}.
\label{eq:NNVNR}
\end{equation}
This ratio can be interpreted as the ``modulation loss'' of $\Lambda$. For an AWGN-good sequence, $\log \gamma(\Lambda_n,\eps)$ vanishes as $n \to \infty$.

Let $\sigma_s^2$, $\sigma_w^2$ denote the nominal power of the signal and the noise power per symbol
respectively and let $\sigma_{{\rm eff}}^2 := \frac{\sigma_s^2
\sigma_w^2}{\sigma_s^2 + \sigma_w^2}$.

Let $\lat \subset \R^n$ be a lattice. Given a fixed probability of error $\eps > 0$, we
will now examine a family of dithered coding schemes which on average have
decoding error at most $6\eps$, power per symbol $\leq \sigma_s^2(1 + 4/\sqrt{n})$,
and gap to the capacity depending only on the relationship between the
inverse error function and the lattice determinant. The existence of a good code
from our family will then follow by the union bound appropriately applied.

For the purpose of finding a good code, we shall examine the family of encoding
distributions $D_{\lat+\vec{t}, \sigma_s}$, indexed by $\vec{t} \in \R^n$, i.e.~the
discrete Gaussian on $\lat+\vec{t}$ of parameter $\sigma_s$. For a given
$\vec{t}$, the decoding function $g_\vec{t}$ will be maximum a posteriori decoding
as in Lemma~\ref{lem:MAP}, namely for a noisy signal $\vec{y} \in \R^n$, the
decoded signal corresponds to
\[
g_\vec{t}(\vec{y}) := \arg \min_{\vec{x} \in \lat+\vec{t}} \left\|\vec{x}-\frac{\sigma_s^2}{\sigma_s^2+\sigma_w^2}
\vec{y}\right\| \text{ .}
\]

To prove our bounds on the average properties of such codes we need a
distribution on shifts $\vec{t}$. For this distribution, we pick the natural
choice $\vec{T} \sim \mathcal{N}(\vec{0}, \sigma_s^2 \mathbb{I}_n)$, i.e.~the
shift is distributed according to the maximum entropy input distribution
satisfying the average power constraint.

The main theorem we will prove in this section is as follows.
\begin{thm}
Let $\lat \subset \R^n$ be an $n$-dimensional lattice, $n \geq 16$. For
power $\sigma_s^2=P$, noise variance $\sigma_w^2 > 0$ and probability of error $\eps > 0$,
let $\sigma_{\rm eff}^2 = \frac{\sigma_s^2 \sigma_w^2}{\sigma_s^2+\sigma_w^2}$,
$s_\eps= \inver_\eps(\lat) \cdot \sigma_{\rm eff}$ and $\gamma(\Lambda,\eps) =
\inver_\eps(\lat)^2 V(\lat)^{2/n}/(2\pi e)$.
Then for $\vec{T} \sim \mathcal{N}(\vec{0},\sigma_s^2 \mathbb{I}_n)$ with probability at least $1/2$ the
coding distribution $\vec{X} \sim D_{s_\eps \lat + \vec{T}, \sigma_s}$ equipped with maximum
likelihood decoding satisfies:
\begin{enumerate}
\item The decoding error is bounded by $6\eps$.
\item The squared power per symbol is at most $(1+4/\sqrt{n})\sigma_s^2 $.
\item The per symbol gap to capacity is bounded by $\frac{1}{2}\log \gamma(\Lambda,\eps) +
\frac{2}{\sqrt{n}} + \frac{4}{n}$.
\end{enumerate}
\label{thm:awgn-good-enough}
\end{thm}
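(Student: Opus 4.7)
My plan is to reduce each of the three claims to a statement in expectation over the random dither $\vec{T}$, apply a Markov- or Chebyshev-type inequality to obtain a per-dither guarantee, and then union-bound the three failure events with budget $1/6+1/8+5/24=1/2$. The central tool is the Sampling Lemma (Lemma~\ref{lem:sampling}): since $\vec{T}\sim\mathcal{N}(\vec{0},\sigma_s^2\mathbb{I}_n)$ and $\vec{X}\mid\vec{T}\sim \mathcal{D}_{s_\eps\lat+\vec{T},\sigma_s}$, the marginal law of $\vec{X}$ is \emph{exactly} $\mathcal{N}(\vec{0},\sigma_s^2\mathbb{I}_n)$. This exact Gaussianity is what allows the argument to avoid any smoothing hypothesis on $\lat$.

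Properties~1 and~2 follow from a direct computation in the marginal. Using Lemma~\ref{lem:MAP}, the error event becomes $(\alpha-1)\vec{X}+\alpha \vec{W}\notin \Vor(s_\eps\lat)$; marginally, $\vec{X}$ is Gaussian and independent of $\vec{W}$, so the effective noise is $\mathcal{N}(\vec{0},\sigma_\eff^2\mathbb{I}_n)$. The choice $s_\eps=\inver_\eps(\lat)\sigma_\eff$ is tuned so that, by the very definition of $\inver_\eps$, this noise escapes $\Vor(s_\eps\lat)$ with probability at most $\eps$. Thus $\E_\vec{T}[P_e(\vec{T})]\leq \eps$ and Markov gives $\Pr_\vec{T}[P_e(\vec{T})>6\eps]\leq 1/6$. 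For the power, the Sampling Lemma yields $\E[\|\vec{X}\|^2]=n\sigma_s^2$ and the law of total variance gives $\operatorname{Var}_\vec{T}(\E[\|\vec{X}\|^2|\vec{T}])\leq \operatorname{Var}(\|\vec{X}\|^2)=2n\sigma_s^4$; Chebyshev at the deviation $4\sqrt{n}\sigma_s^2$ bounds the failure probability of property~2 by $1/8$.

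For property~3 (the rate), I start from the closed-form identity
\[
\mathbb{H}(\vec{X}|\vec{T}=\vec{t}) \;=\; \frac{\E[\|\vec{X}\|^2|\vec{T}=\vec{t}]}{2\sigma_s^2}+\frac{n}{2}\log(2\pi\sigma_s^2)+\log f_{\sigma_s}(s_\eps\lat+\vec{t}),
\]
take expectation over $\vec{T}$, and use the Sampling Lemma on the first term (contribution $n/2$). The crucial ingredient is a lower bound on $\E_\vec{T}[\log f_{\sigma_s}(s_\eps\lat+\vec{T})]$: the map $\vec{u}\mapsto f_{\sigma_s}(s_\eps\lat+\vec{u})$ restricted to any fundamental domain $\Pi$ of $s_\eps\lat$ is itself a probability density (unfold the Gaussian measure onto translates of $\Pi$), and the same unfolding shows that $\E_\vec{T}[\log f_{\sigma_s}(s_\eps\lat+\vec{T})]$ equals the negative differential entropy of this density on $\Pi$. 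Since the uniform distribution maximises entropy on $\Pi$, we obtain $\E_\vec{T}[\log f_{\sigma_s}(s_\eps\lat+\vec{T})]\geq -\log V(s_\eps\lat)$. Plugging in $s_\eps=\inver_\eps(\lat)\sigma_\eff$ and the definition of $\gamma(\lat,\eps)$ reduces the average per-symbol entropy to at least $C-\tfrac12\log\gamma(\lat,\eps)$.

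The main obstacle is the concentration step needed to convert this mean bound on $\mathbb{H}(\vec{X}|\vec{T})$ into a per-dither statement with only $2\sqrt n+4$ additive loss, so that the rate-failure probability is at most $5/24$ and the union bound closes. One has to control the downward fluctuations of $\log f_{\sigma_s}(s_\eps\lat+\vec{T})$, which below the smoothing parameter of $s_\eps\lat$ at scale $\sigma_s$ may be substantial. I would combine the Chebyshev bound already obtained on the conditional-power summand of $\mathbb{H}$ with a variance bound for the log-partition summand, possibly exploiting the identity $\mathbb{H}(\vec{X}|\vec{T})=h(\vec{X}+\vec{U}|\vec{T})-\log V(s_\eps\lat)$ with $\vec{U}$ uniform on $\Vor(s_\eps\lat)$ and cancellations between the power and log-partition terms.
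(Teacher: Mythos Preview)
Your treatment of properties~1 and~2 is essentially the paper's, with Chebyshev in place of Chernoff for the power tail; that is fine (the paper uses Jensen inside the moment generating function to get $\Pr_\vec{T}[\,\E[\|\vec{X}/\sigma_s\|^2\mid\vec{T}]\ge n+4\sqrt n\,]\le e^{-4/3}$ and the matching lower tail $\le e^{-4}$, but your $1/8$ budget also works). Your entropy-maximization argument for the mean of $\log f_{\sigma_s}(s_\eps\lat+\vec{T})$ is correct and attractive; in fact it is Jensen's inequality applied to the paper's identity $\E_\vec{T}[\,f_{\sigma_s}(s_\eps\lat+\vec{T})^{-1}\,]=V(s_\eps\lat)$, since both come from the same folding over a fundamental domain.

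The genuine gap is exactly where you say it is: concentration of $\log f_{\sigma_s}(s_\eps\lat+\vec{T})$. Your proposed routes (a variance bound on the log-partition, or cancellations via $h(\vec{X}+\vec{U}\mid\vec{T})$) are fragile below smoothing, because $f_{\sigma_s}(s_\eps\lat+\vec{t})$ can be extremely small on a set of non-negligible measure, so $\log f$ can have a heavy lower tail and unbounded variance in general. The paper avoids this entirely by \emph{not} taking logs before averaging: the same unfolding you used gives the exact first inverse moment
\[
\E_\vec{T}\!\left[\frac{1}{f_{\sigma_s}(s_\eps\lat+\vec{T})}\right]=V(s_\eps\lat),
\]
and then Markov on the inverse yields $\Pr_\vec{T}[\,f_{\sigma_s}(s_\eps\lat+\vec{T})\le e^{-c}/V(s_\eps\lat)\,]\le e^{-c}$ for every $c>0$. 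Taking $c=4$ (the paper's choice) gives $\log f_{\sigma_s}\ge -\log V(s_\eps\lat)-4$ with failure probability $e^{-4}$, which together with the lower tail on the conditional power produces the stated loss $2/\sqrt n+4/n$ and closes the union bound $1/6+e^{-4/3}+2e^{-4}\le 1/2$. With your Chebyshev budget you could equally take $c=\log(24/5)$ and still close at $1/6+1/8+5/24=1/2$; the point is that Markov on $f^{-1}$, not a second-moment estimate on $\log f$, is the missing step.
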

\anote{Moved the notational convenience for after the statement of the theorem}
For notational convenience, we will assume in the remainder that $\sigma_{\rm
	eff} \cdot \inver_\eps(\lat) = 1$ (note that this can be achieved by simply
scaling the lattice). With this normalization, we will be able to achieve a low
probability of error by choosing the codes directly from $\lat$ (instead of a scaling of $\lat$).

To prove the theorem, we will rely on the subsequent lemmas which characterize
the behavior of a randomly dithered channel. We first present these lemmas and
prove Theorem~\ref{thm:awgn-good-enough} at the end of the section.

The following simple sampling lemma shows the random variables corresponding to
either sampling $\vec{T}$ from $\mathcal{N}(\vec{0}, \sigma_s^2 \mathbb{I}_n)$
and a) returning $\vec{T}$ or b) returning a discrete Gaussian sample from
$D_{\lat+\vec{T},\sigma_s}$ have the same distribution.

\begin{lem}[Sampling Lemma]
Let $\vec{T} \sim \mathcal{N}(\vec{0},\sigma_s^2 \mathbb{I}_n)$ and let
$\vec{X} \sim D_{\lat+\vec{T},\sigma_s}$. Then $\vec{T}$ and $\vec{X}$ are identically distributed.
\label{lem:sampling}
\end{lem}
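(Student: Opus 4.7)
The plan is a direct density computation. Conditional on $\vec{T}=\vec{t}$, the random variable $\vec{X}$ takes values in $\Lambda + \vec{t}$ with conditional mass function
\[
\Pr[\vec{X} = \vec{t}+\vec{\lambda} \mid \vec{T}=\vec{t}] = \frac{f_{\sigma_s}(\vec{t}+\vec{\lambda})}{\Theta(\vec{t})}, \qquad \vec{\lambda} \in \Lambda,
\]
where $\Theta(\vec{t}) \triangleq \sum_{\vec{\lambda}\in\Lambda} f_{\sigma_s}(\vec{t}+\vec{\lambda})$ is the lattice theta-like sum. The key observation I would use is that $\Theta$ is $\Lambda$-periodic in $\vec{t}$, which follows immediately by relabeling the sum.

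Next I would compute the marginal density of $\vec{X}$ at a point $\vec{x}\in\R^n$. Writing $\vec{X} = \vec{T} + \vec{\Lambda}$ for a discrete offset $\vec{\Lambda}\in\Lambda$ and summing over all possible offsets, the density is
\[
p_{\vec{X}}(\vec{x}) = \sum_{\vec{\lambda}\in\Lambda} f_{\sigma_s}(\vec{x}-\vec{\lambda}) \cdot \frac{f_{\sigma_s}(\vec{x})}{\Theta(\vec{x}-\vec{\lambda})}.
\]
Here the first factor is the density of $\vec{T}$ at $\vec{x}-\vec{\lambda}$, and the second is the conditional mass assigned to the point $\vec{x} = (\vec{x}-\vec{\lambda}) + \vec{\lambda}$ in the shifted lattice $\Lambda + (\vec{x}-\vec{\lambda})$. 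Invoking periodicity, $\Theta(\vec{x}-\vec{\lambda}) = \Theta(\vec{x})$, so this factor comes out of the sum, and what remains is $\sum_{\vec{\lambda}\in\Lambda} f_{\sigma_s}(\vec{x}-\vec{\lambda}) = \Theta(\vec{x})$ (using symmetry of the Gaussian), which cancels the denominator. I am left with $p_{\vec{X}}(\vec{x}) = f_{\sigma_s}(\vec{x})$, i.e.\ exactly the density of $\vec{T}$.

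To make this fully rigorous I would, equivalently, test against an arbitrary bounded measurable $\phi$ and compute $\mathbb{E}[\phi(\vec{X})]$ by conditioning on $\vec{T}$, then push each summand $\vec{\lambda}$ through the substitution $\vec{t}\mapsto \vec{t}+\vec{\lambda}$ to rewrite the double sum/integral as $\int_{\R^n} f_{\sigma_s}(\vec{t})\phi(\vec{t})\,d\vec{t} = \mathbb{E}[\phi(\vec{T})]$; periodicity of $\Theta$ is again what makes the terms collapse. There is no real obstacle here since the Gaussian weights decay fast enough to justify the interchange of sum and integral by Fubini/Tonelli, and the lemma is purely algebraic once the periodicity of $\Theta$ is noted. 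The only ``subtle'' point to mention is that no smoothing/flatness-factor hypothesis is required: the identity holds for every lattice $\Lambda$ and every $\sigma_s > 0$, which is precisely why this lemma will later allow the main theorem to avoid imposing extra goodness conditions beyond AWGN-goodness.
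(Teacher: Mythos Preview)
Your proof is correct and is essentially the same argument as the paper's: both hinge on the $\Lambda$-periodicity of the normalizing sum $\Theta(\vec{t})=f_{\sigma_s}(\Lambda+\vec{t})$, which causes it to cancel against the sum of Gaussian weights and leave exactly $f_{\sigma_s}(\vec{x})$. The only cosmetic difference is that the paper folds the integral over $\R^n$ into an integral over the Voronoi cell $\mathcal{V}(\Lambda)$ (writing $\vec{t}=\vec{c}+\vec{x}$ with $\vec{c}\in\mathcal{V}(\Lambda)$, $\vec{x}\in\Lambda$) rather than summing over offsets $\vec{\lambda}$ at a fixed $\vec{x}$, but the cancellation step is identical.
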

\begin{proof}
We need only show that $\vec{X}$ has the same probability density as $\vec{T}$. For $\vec{w}
\in \R^n$, note that $\vec{X}$ can only hit $\vec{w}$ if $\vec{w} \in
\lat+\vec{T} \Leftrightarrow \vec{T} \in \lat+\vec{w}$. Therefore, for any measurable set $A
\subseteq \R^n$, we have that
\begin{align*}
\Pr[\vec{X} \in A] &= \int_{\R^n} f_{\sigma_s}(\vec{t}) \Pr[\vec{X} \in A | T = \vec{t}] { d\mathbf{t}} \\
      &= \int_{\R^n} f_{\sigma_s}(\vec{t})
       \frac{f_{\sigma_s}(A \cap (\lat+\vec{t}))}{f_{\sigma_s}(\lat+\vec{t})} { d\mathbf{t}} \\
      &= \int_{\mathcal{V}(\lat)} \sum_{\vec{x} \in \lat+\vec{c}} f_{\sigma_s}(\vec{x})
       \frac{f_{\sigma_s}(A \cap (\lat+\vec{c}))}{f_{\sigma_s}(\lat+\vec{c})} { d\mathbf{c}} \\
      &= \int_{\mathcal{V}(\lat)} f_{\sigma_s}(A \cap (\lat+\vec{c})) { d\mathbf{c}}
      = \int_A f_{\sigma_s}(\vec{x}) { d\mathbf{x}} \text{, }
\end{align*}
as needed.
\end{proof}
\begin{rem}
	Note that the sampling lemma still holds if we take $\vec{T} \sim \mathcal{N}(0,\sigma_s^2\I_n) \text{ \upshape{mod} } \Lambda$, where the mod-$\Lambda$ operation maps a point in $\mathbb{R}^n$ to a coset representative in the Voronoi cell of $\Lambda$. The $\ell_{\infty}$ distance between $\mathcal{N}(0,\sigma_s^2\I_n) \text{ \upshape{mod} } \Lambda$ and a uniform distribution in $\mathcal{V}(\Lambda)$, normalized by $1/V(\Lambda)$, is the so-called flatness factor (e.g. \cite{LB14}). If the flatness factor of $\Lambda$ is small, roughly speaking, one could replace $\vec{T}$ in Lemma \ref{lem:sampling} by a uniform dither. As we will see later in Section \ref{sec:converse}, this condition is too stringent for low SNR.
\end{rem}

We now show that expected probability of error of our lattice family is exactly $\eps$.  In
what follows, $\vec{T} \sim \mathcal{N}(\vec{0},\sigma_s^2 \mathbb{I}_n)$ will
be our random shift of $\lat$, $\vec{W} \sim \mathcal{N}(\vec{0},\sigma_w^2
\mathbb{I}_n)$ will be the channel noise, $\vec{X} \sim
D_{\lat+\vec{T},\sigma_s}$ will denote the coding distribution, and
$g_\vec{T}$ is our decoding function. We recall our assumption that
$\sigma_{{\rm eff}} \cdot \inver_\eps(\lat) = 1$.

\begin{lem}[Error Probability Bound] For $\gamma \geq 1$,
\[
\Pr_\vec{T}[\Pr_\vec{X}[g_{\vec{T}}(\vec{X}+\vec{W}) \neq \vec{X}] \geq \gamma \eps] \leq 1/\gamma \text{ .}
\]
\label{lem:decoding-error}
\end{lem}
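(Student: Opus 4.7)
The plan is to first show that the \emph{expected} decoding error (averaged over the dither $\vec{T}$) is at most $\eps$, and then apply Markov's inequality to get tail control with factor $\gamma$.

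The core step is to identify the marginal distribution of the effective noise $\vec{W}_{\eff} = (\alpha-1)\vec{X} + \alpha \vec{W}$, where $\alpha = \sigma_s^2/(\sigma_s^2+\sigma_w^2)$. By construction, the channel noise $\vec{W} \sim \mathcal{N}(\vec{0}, \sigma_w^2 \mathbb{I}_n)$ is independent of $(\vec{T},\vec{X})$. By the Sampling Lemma (Lemma~\ref{lem:sampling}), the marginal distribution of $\vec{X}$ over the joint randomness of $\vec{T}$ and the discrete Gaussian sample is exactly $\mathcal{N}(\vec{0}, \sigma_s^2 \mathbb{I}_n)$. Hence $\vec{W}_{\eff}$ is (marginally) a linear combination of two independent isotropic Gaussians, and a quick variance computation gives
\[
\vec{W}_{\eff} \sim \mathcal{N}\!\left(\vec{0}, \tfrac{\sigma_s^2 \sigma_w^2}{\sigma_s^2+\sigma_w^2}\,\mathbb{I}_n\right) = \mathcal{N}(\vec{0},\sigma_{\eff}^2\,\mathbb{I}_n)\text{ .}
\]

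Next, by the MAP-equivalence from Lemma~\ref{lem:MAP}, the decoding error event $\{g_\vec{T}(\vec{X}+\vec{W})\neq \vec{X}\}$ is exactly $\{\vec{W}_{\eff}\notin \Vor(\lat)\}$ (note that $\vec{X}\in \lat+\vec{T}$, so $\Vor(\lat+\vec{T})$ is just a translate of $\Vor(\lat)$). Using the normalization $\sigma_{\eff}\cdot \inver_\eps(\lat)=1$, we can rescale to a standard Gaussian and read off
\[
\Pr[\vec{W}_{\eff}\notin \Vor(\lat)] \;=\; \Pr_{\vec{z}\sim \mathcal{N}(\vec{0},\mathbb{I}_n)}\!\bigl[\vec{z}\notin \inver_\eps(\lat)\cdot\Vor(\lat)\bigr]\;\leq\;\eps\text{,}
\]
directly by the definition of $\inver_\eps$ in \eqref{eq:errInv}. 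By Fubini this gives $\mathbb{E}_{\vec{T}}\bigl[\Pr_{\vec{X},\vec{W}}[g_\vec{T}(\vec{X}+\vec{W})\neq \vec{X}]\bigr]\leq \eps$.

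Finally, since the inner conditional probability is a nonnegative random variable of $\vec{T}$ with expectation at most $\eps$, Markov's inequality yields
\[
\Pr_\vec{T}\bigl[\Pr_\vec{X}[g_\vec{T}(\vec{X}+\vec{W})\neq \vec{X}]\geq \gamma \eps\bigr]\;\leq\; \tfrac{\eps}{\gamma\eps}\;=\;\tfrac{1}{\gamma}\text{ .}
\]
The only nontrivial step is the marginalization of $\vec{W}_{\eff}$; everything else is a direct application of Lemmas~\ref{lem:MAP} and~\ref{lem:sampling}, the definition of $\inver_\eps$, and Markov. The subtlety to flag is that one must carefully distinguish the \emph{joint} distribution (where $\vec{X}$ is Gaussian by the Sampling Lemma) from the \emph{conditional} distribution given $\vec{T}$ (where $\vec{X}$ is discretely supported on $\lat+\vec{T}$); the argument above uses the former to compute the expectation, and the latter is what Markov is applied over.
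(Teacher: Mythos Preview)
Your proposal is correct and follows essentially the same route as the paper: identify the decoding error event as $\vec{W}_{\eff}\notin\Vor(\lat)$, use the Sampling Lemma to conclude that marginally $\vec{X}$ is continuous Gaussian so that $\vec{W}_{\eff}\sim\mathcal{N}(\vec{0},\sigma_{\eff}^2\mathbb{I}_n)$, invoke the normalization $\sigma_{\eff}\cdot\inver_\eps(\lat)=1$ to get expected error $\eps$, and finish with Markov. The only cosmetic difference is that the paper states the averaged error equals $\eps$ (by continuity of the Gaussian tail in the scaling), whereas you write $\leq\eps$; either suffices.
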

\begin{proof}
Let $\alpha = \frac{\sigma_s^2}{\sigma_s^2+\sigma_w^2}$.
Recall that $g_{\vec{T}}(\vec{X}+\vec{W}) \neq \vec{X}$ if and only if
\[
(1-\alpha)\vec{X} + \alpha \vec{W} \notin \mathcal{V}(\lat) \text{ .}
\]
By the sampling lemma $\vec{X}$ has distribution
$\mathcal{N}(\vec{0},\sigma_s^2 \mathbb{I}_n)$. Given that $\vec{X}$ and $\vec{W}$ are
independent Gaussians, we have that $(1-\alpha)\vec{X}+ \alpha \vec{W}$ is distributed as
$\mathcal{N}(\vec{0},\sigma_{\rm eff}^2 \mathbb{I}_n)$ since by construction
\[
\sigma_{{\rm eff}}^2 := \frac{\sigma_s^2 \sigma_w^2}{\sigma_s^2+\sigma_w^2}
                      = (1-\alpha)^2 \sigma_s^2 + \alpha^2 \sigma_w^2 \text{ .}
\]
By assumption on $\lat$ we have that $\sigma_{\rm eff} \cdot \inver_\eps(\lat) = 1$, and thus
\begin{align*}
\Pr[(1-\alpha)\vec{X} + \alpha \vec{W} \notin \mathcal{V}(\lat)]
&= \Pr_{\vec{Z} \sim \mathcal{N}(\vec{0},\mathbb{I}_n)}[\sigma_{\rm eff} \vec{Z} \notin \mathcal{V}(\lat)] \\
&= \Pr_{\vec{Z} \sim \mathcal{N}(\vec{0},\mathbb{I}_n)}[\vec{Z} \notin \inver_{\eps}(\lat)
\mathcal{V}(\lat)] = \eps \text{ .}
\end{align*}

Using the above, by Markov's inequality we have that
\[
\Pr_{\vec{T}}[\Pr_\vec{X}[g_{\vec{T}}(\vec{X}+\vec{W}) \neq \vec{X}] \geq \gamma \eps] \leq
\frac{\Pr[g_{\vec{T}}(\vec{X}+\vec{W}) \neq \vec{X}]}{\gamma \eps} = \frac{1}{\gamma} \text{ ,}
\]
as needed.
\end{proof}

The next lemma shows that average power per symbol is very close to the desired
limit. The proof proceeds by a comparison to the continuous Gaussian followed by
a standard Chernoff bound.

\begin{lem} For any $\eps > 0$, we have that
\begin{align*}
1. \Pr_\vec{T}[\E[\|(\vec{X}/\sigma_s)\|^2 | \vec{T}]] \geq (1+\eps)n] \leq e^{-(\eps^2/4-\eps^3/6)n} \\
2. \Pr_\vec{T}[\E[\|(\vec{X}/\sigma_s)\|^2 | \vec{T}]] \leq (1-\eps)n] \leq e^{-(\eps^2/4+\eps^3/6)n}
\end{align*}
\label{lem:concentration}
\end{lem}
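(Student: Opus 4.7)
The plan is to prove both concentration bounds via a standard Chernoff argument, where the key trick is to reduce the moment generating function (MGF) of the conditional expectation $Y(\vec{T}) := \E[\|\vec{X}/\sigma_s\|^2 \mid \vec{T}]$ to that of a chi-squared random variable with $n$ degrees of freedom. This reduction combines the Sampling Lemma (Lemma~\ref{lem:sampling}) with Jensen's inequality.

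Concretely, I would proceed as follows. For the upper tail, fix $\lambda \in (0, 1/2)$ and apply the exponential Markov inequality:
\[
\Pr_\vec{T}\bigl[Y(\vec{T}) \geq (1+\eps)n\bigr] \leq e^{-\lambda(1+\eps)n}\,\E_\vec{T}\bigl[e^{\lambda Y(\vec{T})}\bigr].
\]
Since $z \mapsto e^{\lambda z}$ is convex, Jensen's inequality applied to the inner conditional expectation gives
\[
e^{\lambda Y(\vec{T})} = \exp\bigl(\lambda\,\E[\|\vec{X}/\sigma_s\|^2 \mid \vec{T}]\bigr) \leq \E\bigl[e^{\lambda \|\vec{X}/\sigma_s\|^2} \mid \vec{T}\bigr],
\]
and after taking an outer expectation and applying the tower property,
\[
\E_\vec{T}\bigl[e^{\lambda Y(\vec{T})}\bigr] \leq \E\bigl[e^{\lambda \|\vec{X}/\sigma_s\|^2}\bigr].
\]
Now the Sampling Lemma says that the marginal distribution of $\vec{X}$ is exactly $\mathcal{N}(\vec{0},\sigma_s^2\I_n)$, so $\|\vec{X}/\sigma_s\|^2$ has a $\chi^2_n$ distribution and the right-hand side equals the familiar chi-squared MGF $(1-2\lambda)^{-n/2}$.

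Plugging this in and optimizing by taking $\lambda = \eps/(2(1+\eps))$ yields the bound $\Pr[Y \geq (1+\eps)n] \leq \bigl(e^{-\eps}(1+\eps)\bigr)^{n/2}$; expanding $\log(1+\eps) \leq \eps - \eps^2/2 + \eps^3/3$ gives the stated exponent $-(\eps^2/4 - \eps^3/6)n$. For the lower tail, the analogous argument with the $-\lambda$ version of Markov applies (noting $z \mapsto e^{-\lambda z}$ is again convex, so Jensen's inequality runs in the same direction), the relevant MGF is $(1+2\lambda)^{-n/2}$, and the choice $\lambda = \eps/(2(1-\eps))$ combined with the expansion $\log(1-\eps) \leq -\eps - \eps^2/2 - \eps^3/3$ produces the exponent $-(\eps^2/4 + \eps^3/6)n$.

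No step presents a serious obstacle: the only conceptual ingredient beyond routine chi-squared tail estimates is the Jensen--plus--Sampling-Lemma trick that bypasses any need to control the second moments of the discrete Gaussian $D_{\lat+\vec{t},\sigma_s}$ \emph{pointwise} in $\vec{t}$ (such pointwise bounds would typically require a smoothing-parameter hypothesis, which we explicitly want to avoid). By exploiting only the marginal distribution of $\vec{X}$, we obtain the desired concentration for any lattice $\lat$, as required.
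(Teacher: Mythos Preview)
Your proposal is correct and follows essentially the same approach as the paper: Chernoff bound, then Jensen's inequality to pass from the conditional expectation to the unconditional MGF, then the Sampling Lemma to identify the marginal of $\vec{X}/\sigma_s$ as standard Gaussian (so the MGF is $(1-2\lambda)^{-n/2}$), and finally the same optimizing choices $\lambda = \eps/(2(1\pm\eps))$ with the same Taylor bounds on $\log(1\pm\eps)$.
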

\begin{proof}
We first prove the upper bound. Since $\vec{X}/\sigma_s \sim
\mathcal{N}(0,\mathbb{I}_n)$, a standard computation reveals that $\E[e^{\alpha\|\vec{X}/\sigma_s\|^2}] = (1-2\alpha)^{-n/2}$ for $\alpha < 1/2$. By the
Chernoff bound

\begin{equation*}
\begin{split}
\Pr_\vec{T}[\E[\|(\vec{X}/\sigma_s)\|^2| \vec{T}] \geq (1+\eps)n]
&\leq \min_{\alpha \in (0,1/2)} \frac{\E_\vec{T}[e^{\alpha
\E[\|(\vec{X}/\sigma_s)\|^2| \vec{T}]}]}{e^{\alpha(1+\eps)n}} \\
&\leq \min_{\alpha \in (0,1/2)} \frac{\E[e^{\alpha \|(\vec{X}/\sigma_s)\|^2}]}
      {e^{\alpha(1+\eps)n}} \quad \left(\text{ by Jensen's inequality }\right) \\
&= \min_{\alpha \in (0,1/2)} \left(\frac{e^{-(1+\eps)\alpha}}{\sqrt{1-2\alpha}}\right)^{n} \\
&=
(\sqrt{1+\eps}e^{-\eps/2})^n < e^{-(\eps^2/4-\eps^3/6)n}
      \left( \text{ setting $\alpha = \frac{\eps}{2(1+\eps)}$ } \right) \text{ .}
\end{split}
\end{equation*}
Similarly, for the lower bound, we have
\begin{align*}
\Pr_\vec{T}[\E[\|(\vec{X}/\sigma_s)\|^2| \vec{T}] \leq (1-\eps)n]
&\leq \min_{\alpha > 0} \frac{\E_\vec{T}[e^{-\alpha \E[\|(\vec{X}/\sigma_s)\|^2|
\vec{T}]}]}{e^{-\alpha(1-\eps)n}} \\
&\leq \min_{\alpha > 0} \frac{\E[e^{-\alpha \|(\vec{X}/\sigma_s)\|^2}]}
      {e^{\alpha(1-\eps)n}} \quad \left(\text{ by Jensen's inequality }\right) \\
&= \min_{\alpha > 0} \left(\frac{e^{(1-\eps)\alpha}}{\sqrt{1+2\alpha}}\right)^{n} \\
&=
(\sqrt{1-\eps}e^{\eps/2})^n < e^{-(\eps^2/4+\eps^3/6)n}
      \left( \text{ setting $\alpha = \frac{\eps}{2(1-\eps)}$ } \right) \text{ .}
\end{align*}
\end{proof}

We now argue that the entropy of the coding distribution is large with good
probability. In particular, we would like to know that for most choices
$\vec{t}$ for $\vec{T}$ that $\mathbb{H}(\vec{X} | \vec{T}=\vec{t})$ is large.
Recall that $\vec{X} | \vec{T}=\vec{t}$ is distributed as
$D_{\Lambda+\vec{t},\sigma_s}$, where a direct computation gives
\anote{Changed the notation of the entropy to "uniformise" with the rest of the text}
\begin{equation}
\label{eq:entropy}
\begin{split}
\mathbb{H}(\vec{X} | \vec{T} = \vec{t})
&= \sum_{\vec{y} \in \lat+\vec{t}}
-\log\left(\frac{f_{\sigma_s}(\vec{y})}{f_{\sigma_s}(\lat+\vec{t})}\right)
\frac{f_{\sigma_s}(\vec{y})}{f_{\sigma_s}(\lat+\vec{t})} \\
&=
\log\left((\sqrt{2\pi \sigma^2_s})^n f_{\sigma_s}(\lat+\vec{t})\right)
+ \frac{1}{2} \E_{\vec{Z} \sim D_{\lat+\vec{t},\sigma_s}}[\|\vec{Z}/\sigma_s\|^2] \text{ .}
\end{split}
\end{equation}
Thus, to prove that the entropy is large we must show that both
$f_{\sigma_s}(\lat+\vec{t})$ and $\E[\|\vec{Z}/\sigma_s\|^2]$ are large with good
probability over $\vec{t}$. Note that the latter condition is essentially given
by Lemma~\ref{lem:concentration}, so we focus now on the former.

The following lemma uses a bound on the negative moment of
$f_{\sigma_s}(\lat+\vec{t})$ to show that it is unlikely to be too small.

\begin{lem}
\[
\Pr_{\vec{T}}\left[f_{\sigma_s}(\lat+\vec{T}) \leq \frac{\eps}{V(\lat)}\right] \leq \eps \text{ .}
\]\label{lem:gaussian-lb}
\end{lem}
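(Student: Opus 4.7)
The plan is to establish the bound by applying Markov's inequality to the non-negative random variable $1/f_{\sigma_s}(\lat+\vec{T})$. Specifically, the event $\{f_{\sigma_s}(\lat+\vec{T}) \leq \eps/V(\lat)\}$ coincides with $\{1/f_{\sigma_s}(\lat+\vec{T}) \geq V(\lat)/\eps\}$, and Markov gives
\[
\Pr_\vec{T}\left[f_{\sigma_s}(\lat+\vec{T}) \leq \frac{\eps}{V(\lat)}\right]
\leq \frac{\eps}{V(\lat)}\cdot \E_\vec{T}\!\left[\frac{1}{f_{\sigma_s}(\lat+\vec{T})}\right].
\]
It therefore suffices to prove $\E_\vec{T}[1/f_{\sigma_s}(\lat+\vec{T})] \leq V(\lat)$; in fact I expect equality to hold, which will give the clean bound stated in the lemma.

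To evaluate the expectation, I would write it as an integral against the density of $\vec{T}$, namely $\int_{\R^n} f_{\sigma_s}(\vec{t})/f_{\sigma_s}(\lat+\vec{t})\,d\vec{t}$, and then exploit the fact that $\vec{t}\mapsto f_{\sigma_s}(\lat+\vec{t})$ is invariant under translation by lattice vectors. Tiling $\R^n$ by the fundamental region $\mathcal{V}(\lat)$ and writing $\vec{t} = \vec{c}+\vec{y}$ with $\vec{c}\in \mathcal{V}(\lat)$ and $\vec{y}\in \lat$ yields
\[
\E_\vec{T}\!\left[\frac{1}{f_{\sigma_s}(\lat+\vec{T})}\right]
= \int_{\mathcal{V}(\lat)} \frac{\sum_{\vec{y}\in\lat} f_{\sigma_s}(\vec{c}+\vec{y})}{f_{\sigma_s}(\lat+\vec{c})}\,d\vec{c}.
\]

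The key observation is that, by definition, $\sum_{\vec{y}\in\lat} f_{\sigma_s}(\vec{c}+\vec{y}) = f_{\sigma_s}(\lat+\vec{c})$, so the integrand collapses to $1$ and the integral evaluates to $V(\lat)$. Plugging this into the Markov bound yields exactly $\eps$, completing the proof. The only step requiring any care is the exchange of sum and integral in the coset decomposition, but this is immediate from Tonelli since the integrand is non-negative, so there is no real obstacle; essentially the lemma is a one-line consequence of Markov once the periodicity of $f_{\sigma_s}(\lat+\cdot)$ is noticed.
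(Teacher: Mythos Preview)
Your proposal is correct and follows essentially the same approach as the paper: compute $\E_\vec{T}[f_{\sigma_s}(\lat+\vec{T})^{-1}]$ by unfolding the integral over $\R^n$ into an integral over a fundamental domain via the $\lat$-periodicity of $f_{\sigma_s}(\lat+\cdot)$, observe that the numerator and denominator cancel to give $V(\lat)$, and then apply Markov's inequality. The only cosmetic difference is the order in which you present Markov and the expectation computation.
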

\begin{proof}
To begin, we have that

\begin{align*}
\E_\vec{T}[f_{\sigma_s}(\lat+\vec{T})^{-1}]
&= \int_{\R^n} f_{\sigma_s}(\vec{t}) f_{\sigma_s}(\lat+\vec{t})^{-1} { d\mathbf{t}}
= \int_{\R^n / \lat} \sum_{\vec{t} \in \lat+\vec{c}} f_{\sigma_s}(\vec{t}) f_{\sigma_s}(\lat+\vec{c})^{-1} { d\mathbf{c}} \\
&= \int_{\R^n / \lat} f_{\sigma_s}(\lat+\vec{c}) f_{\sigma_s}(\lat+\vec{c})^{-1} { d\mathbf{c}}
= \int_{\R^n / \lat} { d\mathbf{c}} = V(\lat) \text{ .}
\end{align*}
By Markov's inequality, we have
\begin{align*}
\Pr_{\vec{T}}\left[f_{\sigma_s}(\lat+\vec{T}) \leq \frac{\eps}{V(\lat)}\right] \leq
\eps ~ \frac{\E_{\vec{T}}\left[f_{\sigma_s}(\lat+\vec{T})^{-1}\right]}{V(\lat)} = \eps \text{ ,}
\end{align*}
as needed.
\end{proof}
\anote{I think d) should be $\leq$. Also, if we are willing to lose $\log n/n$ (which we are going to lose anyway), we can replace $e^4$ by $n$, improving the probability of a good dither.}
\danote{What error do you see? I'm not sure that we can do much to improve the
probability of the dither being good, because this is only one of many events we need
to control. In particular, we would need tighter concentration for the error
probability in Lemma~\ref{lem:decoding-error} than we have now.}

\begin{proof}[Proof of Theorem~\ref{thm:awgn-good-enough}]
	By scaling the lattice, we may assume as above that $\sigma_{\rm eff} \cdot
	\inver_\eps(\lat) = 1$ and thus that $s_\eps = 1$. By
	Lemmas~\ref{lem:decoding-error},~\ref{lem:concentration},~\ref{lem:gaussian-lb}
	we have that
	\begin{enumerate}
		\item[a.] $\Pr_\vec{T}[\Pr_\vec{X}[g_{\vec{T}}(\vec{X}+\vec{W}) \neq \vec{X}] \geq 6 \eps] \leq 1/6$.
		\item[b.] $\Pr_\vec{T}[\E[\|\vec{X}/\sigma_s\|^2 | \vec{T}] \geq n + 4\sqrt{n}] \leq e^{-4 +
			64/(6\sqrt{n})} \leq e^{-4/3}$ for $n \geq 16$ (setting $\eps = 4/\sqrt{n}$).
		\item[c.] $\Pr_\vec{T}[\E[\|\vec{X}/\sigma_s\|^2 | \vec{T}] \leq n - 4\sqrt{n}] \leq e^{-4}$ (setting $\eps = 4/\sqrt{n}$).
		\item[d.] $\Pr_{\vec{T}}[f_{\sigma_s}(\lat+\vec{T}) \leq \frac{1}{e^4V(\lat)}]
		\leq e^{-4}$.
	\end{enumerate}
	
	Since $1/6 + e^{-4/3} + 2e^{-4} \leq 1/2$, by the union bound we have $\vec{T}$
	satisfies the complement of all the above events with probability at least
	$1/2$. Let $\vec{t} \in \R^n$ be such a setting of $\vec{T}$. Clearly, the
	complement of (a) implies that the decoding error is at most $6\eps$, and the
	complement of (b) implies that averaged squared power per symbol is at most
	$\frac{1}{n}(n + 4\sqrt{n}) \sigma_s^2 = (1+4/\sqrt{n})\sigma_s^2$.
	
	It now remains to show the gap to capacity, i.e.~a lower bound on the entropy of
	$D_{\lat+\vec{t},\sigma_s}$. By \eqref{eq:entropy} and the complement of
	(c) and (d), we have that
	\begin{align*}
	\frac{1}{n} \mathbb{H}(\vec{X} | \vec{T} = \vec{t}) &=
	\frac{1}{n} \log\left((\sqrt{2\pi \sigma^2_s})^n f_{\sigma_s}(\lat+\vec{t})\right)
	+ \frac{1}{2n} \E_{\vec{Z} \sim
		D_{\lat+\vec{t},\sigma_s}}[\|\vec{Z}/\sigma_s\|^2] \\
	&\geq \log \left(\sqrt{2\pi \sigma^2_s}/(e^{4/n}V(\lat)^{1/n})\right) +
	\frac{1}{2}(1-4/\sqrt{n}) \\
	&= \frac{1}{2}\log \left(2\pi e \sigma_s^2/V(\lat)^{2/n}\right)
	-(2/\sqrt{n}+4/n) \\
	&= \frac{1}{2}\log \left(\sigma_s^2 \inver(\lat)^{2} / \gamma(\Lambda,\eps))\right)
	-(2/\sqrt{n}+4/n) \\
	&= \frac{1}{2}\log \left(1+\sigma_s^2/\sigma_w^2\right)
	-(1/2\log \gamma(\Lambda,\eps)+2/\sqrt{n}+4/n) \quad \left(~\sigma_{\rm eff} \cdot
	\inver_\eps(\lat) = 1~\right) \text{ .}
	\end{align*}
	The theorem now follows recalling that the capacity of the Gaussian
	channel is $\frac{1}{2}\log(1+\sigma_s^2/\sigma_w^2)$.
\end{proof}

\section{Discussion}
\label{sec:discussion}
\subsection{Finite Blocklength}
Next we discuss how the behavior of the proposed scheme compares to the best finite-blocklength codes for the Gaussian channel. The optimal rate to which a code can converge to capacity is given, in nats per dimension, by \cite{PPV09}:
\begin{equation} R = C - \sqrt{\frac{V}{n}} Q^{-1}(\varepsilon) + O\left(\frac{\log n}{n}\right)
\end{equation}
where
$$V = \lim_{\varepsilon \to 0} \limsup_{n\to \infty} \frac{-n(C - R)^2}{2\log
\varepsilon} = \frac{1}{2}\left(1-\frac{1}{(\snr+1)^2}\right) $$
is called the \textit{dispersion} of the channel and $\varepsilon$ is the target
probability of error. As usual $Q(x) = \int_x^\infty e^{-t^2/2}/\sqrt{2\pi} dt$,
 $x \in \R$,
is the Gaussian tail distribution. Notice that $V \approx 1/2$ for high
signal-to-noise ratio.  There is an analogous result for unconstrained
constellations, in which case the dispersion is exactly $1/2$. If
$\delta_\varepsilon(n)$ denotes the maximum NLD of a constellation for which the
probability of error is at most $\varepsilon$, then \cite{IZF13}:
\begin{equation} \delta_\varepsilon(n) = \delta^* - \sqrt{\frac{1}{2n}} Q^{-1}(\varepsilon) + O\left(\frac{\log n}{n}\right),
\end{equation}
where $\delta^* = -\frac{1}{2} \log\left(2 \pi e \sigma_\eff^2\right)$ \danote{I
think in our application, we want this lattice to be good for the effective
error $\sigma_\eff$ as opposed to $\sigma_w$.} is the
Poltyrev limit. The value $\delta_\varepsilon(n)$ also dictates the optimal
behavior of an AWGN-good sequence of lattices. Indeed,
\begin{equation}\frac{1}{2} \log \gamma(\lat,\varepsilon) \geq \delta^{*} -\delta_{\varepsilon}(n)  = \sqrt{\frac{1}{2n}} Q^{-1}(\varepsilon) + O\left(\frac{\log n}{n}\right).
\end{equation}
Therefore for high $\snr$ the lattice Gaussian scheme approaches optimal dispersion, provided that it is coupled with an infinite lattice with optimal behavior.
We further observe that \cite[p 131]{ZamirBook} conjectured that the gap to capacity of non-equiprobable signalling is upper bounded by
$$\frac{1}{2} \log \left( \mu(\lat,\varepsilon) G(\lat) \right),$$
where $G(\lat)$ is the \textit{normalized-second moment} of $\lat$. Noting that $G(\lat) > 1/2\pi e$, up to lower order terms which vanish as $n \to \infty$, the expression for the gap to capacity in Theorem \ref{thm:awgn-good-enough} supports the conjecture.

\subsection{Relation to the Smoothing Parameter}

We now record a new connection between AWGN-Goodness of a lattice and the
smoothing parameter of its \emph{dual lattice}. As explained in the last two
sections, a useful quantitative measure of the AWGN-Goodness of a lattice $\lat$
with respect to a desired target error $\eps$ is its \emph{normalized volume to
noise ratio} $\gamma(\lat,\eps)$, where we recall that up to lower order terms,
$\ln(\gamma(\lat,\eps))/2$ upper bounds the gap to capacity of the discrete
Gaussian coding scheme on $\lat$ (appropriately scaled and shifted).   We also recall the definition of \textit{dual} lattice
$$\lat^* = \set{ \vec{x} \in \R^n : \left\langle \vec{x}, \vec{y} \right\rangle \in \mathbb{Z} \mbox{ for all } \vec{y} \in \lat}.$$

The smoothing parameter $\eta_\eps(\lat^*)$ of $\lat^*$~\footnote{We normalize the smoothing
parameter with $1/2$ in the exponent here instead of the usual $\pi$ for
simplicity.} is defined to be the unique scaling $s > 0$ such that
\[
\sum_{x \in \lat \setminus \set{0}} e^{-\|sx\|^2/2} = \varepsilon \text{ .}
\]
At a high level, the discrete Gaussian over $\lat^*$ sampled \emph{above the
smoothing parameter} ``looks like'' a continuous Gaussian, which justifies the
name.

The connection to the inverse error function was given in~\cite{CDLP13}, in the
context of understanding the complexity of approximating the smoothing
parameter. It can be expressed as follows:

\begin{lem}\cite{CDLP13}
For $\varepsilon \in [0,1]$ and $\lat \subset \R^n$ an $n$-dimensional lattice, we have that:
\[
\eta_{\eps/(1-\eps)}(\lat^*) \leq {\rm err}^{-1}_\eps(\lat) \leq 2 \cdot \eta_{\eps}(\lat^*) \text{ .}
\]
\end{lem}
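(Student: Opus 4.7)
The plan is to prove both inequalities by comparing, for the appropriate scaling $s$, the Gaussian probability on the lattice translates $\vec{v}+\Vor(\lat)$ (which govern the decoding error) to the pointwise exponentials $e^{-s^{2}\norm{\vec{v}}^{2}/2}$ appearing in the smoothing-parameter sum. The unifying tool throughout is the central symmetry of $\Vor(\lat)$ under $\vec{z}\mapsto -\vec{z}$.

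For the upper bound $\inver_\eps(\lat)\le 2\,\eta_\eps(\lat^*)$, I would set $s=\eta_\eps(\lat^*)$ and consider $\vec{y}=\vec{x}/(2s)$ with $\vec{x}\sim\mathcal{N}(\vec{0},\mathbb{I}_n)$. The failure event $\vec{y}\notin\Vor(\lat)$ says exactly that $\vec{y}$ is closer to some $\vec{v}\in\lat\nozero$ than to the origin, equivalently $\pr{\vec{y}}{\vec{v}}>\norm{\vec{v}}^{2}/2$. A union bound over $\vec{v}$, combined with $\pr{\vec{y}}{\vec{v}}\sim\mathcal{N}(0,\norm{\vec{v}}^{2}/(4s^{2}))$ and the Chernoff tail bound $Q(t)\le e^{-t^{2}/2}$, gives
\[
\Pr\bigl[\vec{y}\notin\Vor(\lat)\bigr] \;\le\; \sum_{\vec{v}\in\lat\nozero} e^{-s^{2}\norm{\vec{v}}^{2}/2} \;=\; \eps,
\]
where the last equality is the defining identity of $\eta_\eps(\lat^*)$. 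This yields $\inver_\eps(\lat)\le 2s$.

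For the lower bound $\eta_{\eps/(1-\eps)}(\lat^*)\le\inver_\eps(\lat)$, set $s=\inver_\eps(\lat)$ and $\vec{y}=\vec{x}/s$, so that $\Pr[\vec{y}\in\Vor(\lat)]\ge 1-\eps$. The heart of the argument is the pointwise comparison
\[
\Pr\bigl[\vec{y}\in\vec{v}+\Vor(\lat)\bigr] \;\ge\; e^{-s^{2}\norm{\vec{v}}^{2}/2}\,\Pr\bigl[\vec{y}\in\Vor(\lat)\bigr], \qquad \vec{v}\in\lat.
\]
To establish it, I would translate the integration region to $\Vor(\lat)$ and expand $\norm{\vec{v}+\vec{z}}^{2}=\norm{\vec{v}}^{2}+2\pr{\vec{v}}{\vec{z}}+\norm{\vec{z}}^{2}$, reducing the ratio of the two probabilities to $e^{-s^{2}\norm{\vec{v}}^{2}/2}\,\E[e^{-s^{2}\pr{\vec{v}}{\vec{Z}}}]$, where $\vec{Z}$ has density proportional to $e^{-s^{2}\norm{\vec{z}}^{2}/2}\mathbf{1}_{\Vor(\lat)}(\vec{z})$. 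Central symmetry of $\Vor(\lat)$ gives $\vec{Z}\stackrel{d}{=}-\vec{Z}$, so the expectation equals $\E[\cosh(s^{2}\pr{\vec{v}}{\vec{Z}})]\ge 1$. Summing the pointwise estimate over $\vec{v}\in\lat\nozero$, using the tiling identity $\sum_{\vec{v}\in\lat}\Pr[\vec{y}\in\vec{v}+\Vor(\lat)]=1$, and dividing by $\Pr[\vec{y}\in\Vor(\lat)]\ge 1-\eps$ produces
\[
\sum_{\vec{v}\in\lat\nozero}e^{-s^{2}\norm{\vec{v}}^{2}/2}\;\le\;\frac{1-\Pr[\vec{y}\in\Vor(\lat)]}{\Pr[\vec{y}\in\Vor(\lat)]}\;\le\;\frac{\eps}{1-\eps},
\]
which by definition gives $s\ge\eta_{\eps/(1-\eps)}(\lat^*)$.

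The upper bound is essentially a routine union bound; the conceptual core is the symmetrization in the lower bound. The $\eps/(1-\eps)$ loss in the statement is produced precisely by normalizing against $\Pr[\vec{y}\in\Vor(\lat)]$ rather than against $1$, and the pointwise $\cosh\ge 1$ estimate relies critically on both the evenness of the exponential kernel in $\pr{\vec{v}}{\vec{Z}}$ and the central symmetry of the Voronoi cell. Without either ingredient, only a looser Chernoff-type comparison would be available, which would not stay close enough to the smoothing parameter to match the claimed inequality.
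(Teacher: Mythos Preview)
Your proposal is correct, and it matches the approach the paper attributes to~\cite{CDLP13}. The paper does not give a full proof of this lemma but only a two-sentence description: the upper bound ``corresponds essentially to the union bound, though over the entire lattice instead of just the facets of the Voronoi cell,'' and the lower bound ``is derived by estimating the Gaussian mass of lattice shifts of the Voronoi cell.'' Your argument instantiates both descriptions exactly --- the Chernoff-plus-union-bound for the right inequality, and the pointwise comparison $\Pr[\vec{y}\in\vec{v}+\Vor(\lat)]\ge e^{-s^{2}\norm{\vec{v}}^{2}/2}\Pr[\vec{y}\in\Vor(\lat)]$ via central symmetry of $\Vor(\lat)$ for the left.
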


The right hand side corresponds essentially to the union bound, though over the
entire lattice instead of just the facets of the Voronoi cell, which should be
tight when $\eps$ is very small (say exponentially small in the lattice
dimension). The left hand side inequality is derived by estimating the Gaussian
mass of lattice shifts of the Voronoi cell, and for this side it is unclear to
us when it can be tight. Unfortunately, even for random lattices both sides can
fail to be tight. In particular, for an $n$-dimensional random lattice $\lat$ of
determinant $1$ and fixed constant error probability $\eps$, $\eta_\eps(\lat^*)
\approx \sqrt{2\pi}$, whereas ${\rm err}^{-1}_\eps(\lat) \approx \sqrt{2\pi e}$,
thus the inverse error function is a $\sqrt{e}$ factor bigger than the dual
smoothing parameter. Nevertheless, the above approximate characterization gives
a new and possibly easier way to check design criterion for good coding lattices.

We leave it as an open problem to understand whether there is a tighter
connection between the smoothing parameter and the inverse error function. In
particular, an interesting question is whether $\eta_\eps(\lat^*) \approx
\sqrt{2 \pi} V(\lat)^{-1/n}$ \footnote{It is not hard to check via the Poisson
summation formula that this is in fact a lower bound on the smoothing
parameter.} implies that ${\rm err}^{-1}_\eps(\lat) \approx \sqrt{2 \pi e}
V(\lat)^{-1/n}$.
\anote{I think it should be $V(\lat)^{-1/n}$ for both terms}
\danote{You're right. Fixed it.}
\danote{decided not to philosophize too much about the relationship with LB14
because the precise relationship between what we need for coding and the
smoothing parameter on either the primal or dual side is now quite unclear to me}

\subsection{Finite Dither}

Our main Theorem \ref{thm:awgn-good-enough} relies on a probabilistic argument
based on the generation of a continuous variable $\mathbf{T}$, that plays the
role of a dither, along with the generation of a discrete Gaussian $D_{\lat +
\mathbf{T}, \sigma_s}$. The usage of continuous dithering is undesirable in
practice since it may require sharing an infinite number of random bits at the
transmitter and receiver. In this section we discuss the possibility of choosing
the dither from a \textit{finite} set of possible vectors. The key idea is to
choose a sufficiently fine lattice $\lat' \supset \lat$ so that the distribution
$D_{\lat^\prime, \sigma_s}$ is sufficiently close to
$\mathcal{N}(0,\sigma_s^2)$. For this purpose, we need a finite version of the
sampling Lemma \ref{lem:sampling}, where we choose a dither from cosets
$\lat'/\lat$ with distribution induced by $D_{\lat', \sigma_s}$. Its proof is
similar to that of Lemma \ref{lem:sampling} and is included here for clarity.

\begin{lem}[Discrete Sampling Lemma]
Let $\lat \subset \lat'$, $\mathbf{T} \sim  D_{\lat', \sigma_s}$, and
$\mathbf{T}' = \mathbf{T} \text{ mod } \lat$. If $\mathbf{X}$ is Sampled from
distribution $D_{\lat + \mathbf{T}', \sigma_s}$, then $\mathbf{T}$ and $\mathbf{X}$ are identically distributed.
\label{lem:discrete-sampling-lemma}
\end{lem}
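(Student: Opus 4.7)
The plan is to mirror the proof of the continuous sampling lemma (Lemma~\ref{lem:sampling}), replacing the integral over $\R^n$ with a sum over $\lat'$, and replacing the disintegration of Lebesgue measure on $\R^n / \lat$ with the disintegration of counting measure on $\lat'/\lat$. Concretely, I would fix an arbitrary point $\vec{w} \in \lat'$ and show directly that $\Pr[\vec{X} = \vec{w}] = \Pr[\vec{T} = \vec{w}] = f_{\sigma_s}(\vec{w})/f_{\sigma_s}(\lat')$.

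The key observation is that $\vec{X}$ lives in $\lat + \vec{T}' = \lat + \vec{T}$, so $\vec{X}$ can only take the value $\vec{w}$ when $\vec{T}$ lies in the same $\lat$-coset as $\vec{w}$, i.e.\ $\vec{T} \in \lat + \vec{w}$. For any such $\vec{t}$, the coset $\lat + \vec{t}$ equals $\lat + \vec{w}$, so the normalizer $f_{\sigma_s}(\lat + \vec{t}) = f_{\sigma_s}(\lat + \vec{w})$ is the same constant for all such $\vec{t}$. Thus
\begin{align*}
\Pr[\vec{X} = \vec{w}] &= \sum_{\vec{t} \in \lat'} \Pr[\vec{T} = \vec{t}]\,\Pr[\vec{X} = \vec{w} \mid \vec{T} = \vec{t}] \\
&= \sum_{\vec{t} \in \lat+\vec{w}} \frac{f_{\sigma_s}(\vec{t})}{f_{\sigma_s}(\lat')} \cdot \frac{f_{\sigma_s}(\vec{w})}{f_{\sigma_s}(\lat+\vec{w})} \\
&= \frac{f_{\sigma_s}(\vec{w})}{f_{\sigma_s}(\lat')\,f_{\sigma_s}(\lat+\vec{w})} \sum_{\vec{t} \in \lat+\vec{w}} f_{\sigma_s}(\vec{t}) \\
&= \frac{f_{\sigma_s}(\vec{w})}{f_{\sigma_s}(\lat')} = \Pr[\vec{T} = \vec{w}],
\end{align*}
which is exactly what is needed.

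There is essentially no obstacle: the argument is purely algebraic bookkeeping, once one notices that the conditional distribution of $\vec{X}$ depends on $\vec{T}$ only through the coset $\vec{T}' = \vec{T} \bmod \lat$, and that summing the $\vec{T}$-probabilities restricted to one coset of $\lat$ in $\lat'$ produces precisely the denominator $f_{\sigma_s}(\lat + \vec{w})$ that appears in the conditional. The only mild point to note is that $\lat'$ is countable, so there is no need to invoke the integral identities used in Lemma~\ref{lem:sampling}; the finiteness/convergence of the Gaussian sums is inherited from $f_{\sigma_s}(\lat') < \infty$.
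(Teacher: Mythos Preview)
Your proof is correct and follows essentially the same approach as the paper's. The paper computes $\Pr[\vec{X}\in A]$ for a general subset $A\subseteq\lat'$ by summing over coset representatives $\vec{t}\in\lat'/\lat$, while you specialize to $A=\{\vec{w}\}$ and sum over all $\vec{t}\in\lat'$ before restricting to the single relevant coset; the cancellation $\sum_{\vec{t}\in\lat+\vec{w}} f_{\sigma_s}(\vec{t}) = f_{\sigma_s}(\lat+\vec{w})$ is the same in both.
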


\begin{proof}
Analogously, we need show that $\vec{X}$ and $\vec{T}$ have the same probability distribution. For any set $A
\subseteq \lat'$, we have that
\begin{align*}
\Pr[\vec{X} \in A] &= \sum_{\vec{t}\in \lat'/\lat} \Pr[\vec{X} \in A | \mathbf{T}' = \vec{t}] \Pr[\mathbf{T}' = \vec{t}] \\
      &= \sum_{\vec{t}\in \lat'/\lat}
       \frac{f_{\sigma_s}(A \cap (\lat+\vec{t}))}{f_{\sigma_s}(\lat+\vec{t})}
\times \sum_{\vec{x} \in \lat+\vec{t}}
\frac{f_{\sigma_s}(\vec{x})}{f_{\sigma_s}(\lat')} \\
     &= \sum_{\vec{t}\in \lat'/\lat}
       \frac{f_{\sigma_s}(A \cap (\lat+\vec{t}))}{f_{\sigma_s}(\lat')}  \\
     &=  \frac{f_{\sigma_s}(A \cap \lat')}{f_{\sigma_s}(\lat')}
      = \Pr[\vec{T} \in A]  \text{,}
\end{align*}
as desired.
\end{proof}

To see how fine $\lat'$ should be, we need the definition of \textit{flatness factor}, which quantifies the distance between the uniform distribution in the Voronoi cell of a lattice and the sum of Gaussians over its cosets.

\begin{deft}[Flatness Factor, \cite{LB14}]
For a lattice $\lat$ and a parameter $\sigma$ we define the \textit{flatness factor} as
\begin{equation}
\epsilon_{\lat}(\sigma) \triangleq \max_{\xx \in \mathcal{V}(\lat)} \left|V(\lat) f_{\sigma}(\lat+\xx) - 1 \right|.
\end{equation}
\end{deft}


We note that there is an inverse relationship with the smoothing parameter, that
is for $\varepsilon,\sigma > 0$, we have $\epsilon_{\lat}(\sigma) = \varepsilon
\Leftrightarrow \eta_{\varepsilon}(\lat) = 2\pi \sigma$. The following lemma
shows that equivalent noise is approximately Gaussian if the flatness factor
of $\lat'$ is small, and in all cases, satisfies a strong tail bound.

\begin{lem} Let $\mathbf{X}$ be as in Lemma
\ref{lem:discrete-sampling-lemma} and $\vec{W}_{\eff} = (1-\alpha)\vec{X} +
\alpha \vec{W}$ be the effective noise, where $\vec{W} \sim
\mathcal{N}(0,\sigma_w^2)$ and let $g(\vec{w}_{\eff})$ be the pdf of
$\vec{W}_{\eff}$. Then the following holds:

\begin{enumerate}
\item $\forall \vec{w}_\eff \in \R^n$,
$g(\vec{w}_\eff) \leq (1+\epsilon_{\lat'}(\sqrt{\alpha}\sigma_s)) \cdot
f_{\sigma_{\eff}}(\vec{w}_{\eff})$.
\item $\forall \eps \in (0,1)$, $\Pr[\|\vec{W}_\eff\| > \sqrt{(1+\eps)n}
\sigma_\eff] \leq e^{-\frac{n}{4}(\eps^2-\eps^3)}$.
\label{lem:discrete-dither-error-bnd}
\end{enumerate}
\end{lem}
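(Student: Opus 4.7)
The plan is to derive a clean closed-form for the pdf $g$ of $\vec{W}_\eff$, from which (1) will fall out immediately, and then to prove (2) by bounding the moment generating function of $\|\vec{W}_\eff\|^2$ by that of a continuous $\sigma_\eff^2 \chi_n^2$ via Poisson summation on the dual lattice, so as to avoid paying the flatness factor in the tail.

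First, since by Lemma~\ref{lem:discrete-sampling-lemma} $\vec{X}\sim D_{\lat',\sigma_s}$ is independent of $\vec{W}$, I will write
\begin{equation*}
g(\vec{w}_\eff) = \frac{1}{f_{\sigma_s}(\lat')} \sum_{\vec{x}\in\lat'} f_{\sigma_s}(\vec{x})\, f_{\alpha\sigma_w}\!\bigl(\vec{w}_\eff-(1-\alpha)\vec{x}\bigr),
\end{equation*}
and complete the square, using $\alpha\sigma_w^2=\sigma_\eff^2$ and $\alpha^2\sigma_w^2 + (1-\alpha)^2\sigma_s^2 = \sigma_\eff^2$, to obtain the pointwise factorisation
\begin{equation*}
f_{\sigma_s}(\vec{x})\, f_{\alpha\sigma_w}\!\bigl(\vec{w}_\eff-(1-\alpha)\vec{x}\bigr) = f_{\sqrt{\alpha}\sigma_s}(\vec{x}-\vec{w}_\eff)\, f_{\sigma_\eff}(\vec{w}_\eff).
\end{equation*}
Summing over $\lat'$ gives $g(\vec{w}_\eff) = f_{\sigma_\eff}(\vec{w}_\eff) \cdot f_{\sqrt{\alpha}\sigma_s}(\lat'-\vec{w}_\eff)/f_{\sigma_s}(\lat')$. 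For (1), the definition of the flatness factor bounds the numerator by $(1+\epsilon_{\lat'}(\sqrt{\alpha}\sigma_s))/V(\lat')$, and Poisson summation in the form $f_{\sigma_s}(\lat') = V(\lat')^{-1}\sum_{\vec{y}\in(\lat')^*} e^{-2\pi^2\sigma_s^2\|\vec{y}\|^2}\geq 1/V(\lat')$ lower-bounds the denominator; the $V(\lat')$ factors cancel, producing precisely the claimed bound.

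For (2), invoking (1) pointwise would leave an uncontrolled factor $1+\epsilon_{\lat'}(\sqrt{\alpha}\sigma_s)$ in front of the tail. Instead I will compute the MGF directly: conditioning on $\vec{X}=\vec{x}$ and using the non-central chi-squared MGF $\E[e^{t\|\mathcal{N}(\vec{\mu},\tau^2\I)\|^2}] = (1-2t\tau^2)^{-n/2} e^{t\|\vec{\mu}\|^2/(1-2t\tau^2)}$ with $\vec{\mu}=(1-\alpha)\vec{x}$, $\tau=\alpha\sigma_w$, then evaluating the ensuing exponentially reweighted discrete Gaussian sum over $\lat'$, one obtains for $t\in[0,1/(2\sigma_\eff^2))$
\begin{equation*}
\E\bigl[e^{t\|\vec{W}_\eff\|^2}\bigr] = \frac{1}{(1-2t\sigma_\eff^2)^{n/2}} \cdot \frac{f_{\sigma'}(\lat')}{f_{\sigma_s}(\lat')}, \qquad (\sigma')^2 := \frac{\sigma_s^2(1-2t\alpha^2\sigma_w^2)}{1-2t\sigma_\eff^2}.
\end{equation*}
Since $\alpha<1$ implies $\alpha^2\sigma_w^2<\sigma_\eff^2$, one checks $\sigma'\geq\sigma_s$; the Poisson formula then exhibits $\sigma\mapsto f_\sigma(\lat')$ as termwise decreasing in $\sigma$, so $f_{\sigma'}(\lat')/f_{\sigma_s}(\lat')\leq 1$ and the ratio is killed. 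This leaves $\E[e^{t\|\vec{W}_\eff\|^2}]\leq (1-2t\sigma_\eff^2)^{-n/2}$, exactly the MGF bound for $\sigma_\eff^2\chi_n^2$, and the standard Chernoff step as in Lemma~\ref{lem:concentration}, taking $t=\eps/(2(1+\eps)\sigma_\eff^2)$, produces $((1+\eps)e^{-\eps})^{n/2}\leq e^{-n(\eps^2/4-\eps^3/6)}$, which is strictly stronger than the stated tail.

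The main obstacle is the asymmetry between $\vec{X}$ and $\vec{W}$ in (2): any uniform pointwise comparison of $g$ with $f_{\sigma_\eff}$ forces the flatness-factor prefactor that the tail is supposed to avoid. The key idea that removes it is that, at the MGF level, the entire ``discrete correction'' collapses into the single ratio $f_{\sigma'}(\lat')/f_{\sigma_s}(\lat')$, which Poisson summation on the dual lattice bounds by $1$ unconditionally whenever $\sigma'\geq\sigma_s$ — a monotonicity that always holds for $t>0$ in the admissible range thanks to $\alpha<1$.
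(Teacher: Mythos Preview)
Your proof of part~1 is essentially the paper's: it too invokes the convolution identity $g(\vec{w}_\eff)=f_{\sigma_\eff}(\vec{w}_\eff)\,f_{\sqrt{\alpha}\sigma_s}(\lat'+\vec{w}_\eff)/f_{\sigma_s}(\lat')$ (there as Lemma~\ref{lem:convolution}), bounds the numerator by the flatness factor, and lower-bounds the denominator by $1/V(\lat')$ via Poisson summation.

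For part~2 your argument is correct but genuinely different from the paper's. The paper proceeds abstractly: it quotes Banaszczyk's result that $D_{\lat',\sigma_s}$ is sub-Gaussian with parameter $\sigma_s$, combines this with the sub-Gaussianity of $\vec{W}$ to conclude that $\vec{W}_\eff$ is $\sigma_\eff$--sub-Gaussian, and then applies the Hsu--Kakade--Zhang norm-tail inequality for sub-Gaussian vectors. You instead compute the MGF of $\|\vec{W}_\eff\|^2$ in closed form and kill the discrete correction $f_{\sigma'}(\lat')/f_{\sigma_s}(\lat')$ via the Poisson-summation monotonicity $\sigma\mapsto f_\sigma(\lat')$ decreasing, which holds precisely because $\sigma'\geq\sigma_s$ on the admissible range of $t$. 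Your route is more self-contained (no external citations needed) and yields the slightly sharper exponent $n(\eps^2/4-\eps^3/6)$; the paper's route is more modular, isolating the sub-Gaussian property of the discrete Gaussian as a reusable black box. At bottom the two are closely related: your monotonicity step is, in disguise, exactly the inequality underlying Banaszczyk's sub-Gaussian bound, so the difference is one of packaging rather than of mathematical substance.
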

\begin{proof}

We prove part $1$. By Lemma \ref{lem:convolution}, proven in the
appendix, we have that $g(\vec{w}_\eff) = f_{\sigma_{\eff}}(\vec{w}_\eff)
\frac{f_{\sqrt{\alpha}\sigma_s}(\lat' + \vec{w}_\eff)}{f_{\sigma_s}(\lat')}$. By
the definition of the flatness factor, we have that
$f_{\sqrt{\alpha}\sigma_s}(\lat') \leq
(1+\epsilon_{\lat'}(\sqrt{\alpha}\sigma_\eff))/\det(\lat')$. Furthermore, by the
Poisson summation formula, it easily follows that $f_{\sigma_s}(\lat') \geq
1/\det(\lat')$. The combination of these two estimates yields the bound.   

The tail bound in part $2$ is proved in the full version of \cite{CLB16}. We
sketch the proof here for completeness. The tail bound relies on properties of
sub-Gaussian random variables. We recall that a random vector $\vec Z \in \R^n$ is
sub-Gaussian with parameter $\sigma$ if $\E[e^{\pr{\vec y}{\vec Z}}] \leq
e^{\sigma^2/2 \|\vec y\|^2}$ for all $\vec y \in \R^n$. In particular, since
$\vec{W}$ is $\mathcal{N}(\vec 0,\sigma_w^2)$, it is sub-Gaussian with parameter
$\sigma_w$. Banaszczyk~\cite[Lemma 2.4]{Banaszczyk95} established that centered
discrete Gaussian random vectors are sub-Gaussian, in particular, that $\vec X
\sim D_{\Lambda',\sigma_s}$ is sub-Gaussian with parameter $\sigma_s$. By the 
additivity properties of sub-Gaussian random vectors, we conclude that
$\vec{W}_\eff = (1-\alpha) \vec X + \alpha \vec W$ is sub-Gaussian with
parameter $\sqrt{(1-\alpha)^2 \sigma_s^2 + \alpha^2 \sigma_w^2} = \sigma_\eff$.
For a $\sigma_\eff$ sub-Gaussian vector $\vec W_{\eff}$, a tail bound due Hsu,
Kakade and Zhang~\cite[Theorem 2.1]{HKZ12} establishes that for all $t > 0$, we
have
\[
\Pr[\|\vec W_{\eff}\|^2 > \sigma_{\eff}^2(n + 2 \sqrt{n t} + 2 t)] \leq e^{-t} .
\]  
The desired bound now follows by plugging in $t = \frac{n}{4} (\sqrt{1+2
\eps}-1)^2 > \frac{n}{4}(\eps^2-\eps^3)$.
\end{proof}

To recover Theorem~\ref{thm:awgn-good-enough} with the continuous dither
replaced by the discrete one, it in fact suffices to show that the probability
bounds given in
Lemmas~\ref{lem:decoding-error},~\ref{lem:concentration},~\ref{lem:gaussian-lb}
hold for the discrete dither distribution $\vec{T} \in \lat'$ up to additional
$1+o(1)$ factors. We presently explain how the conclusions of each Lemma differs
for the discrete dither. By inspecting the proof of
Lemma~\ref{lem:decoding-error}, using Lemma~\ref{lem:discrete-dither-error-bnd}
one can conclude that the error probability bound increases to
$(1+\epsilon_{\lat'}(\sqrt{\alpha}\sigma_s))/\gamma$. The conclusions in
Lemma~\ref{lem:gaussian-lb} and Lemma~\ref{lem:concentration} part $1$ (upper
tail bound) remain unchanged, using the inequality $f_{\sigma_s}(\lat') \geq
1/\det(\lat')$ (by Poisson summation) to prove Lemma~\ref{lem:gaussian-lb} and
Banaszczyk's discrete Gaussian upper tail bound \cite[Lemma 1.5]{Banaszczyk93}
to prove Lemma~\ref{lem:concentration} part $1$. Lemma~\ref{lem:concentration}
part $2$ (lower tail bound) degrades by a factor
$1+\epsilon_{\lat'}(\sigma_s/\sqrt{1+\eps/(2-\eps)})$, where we simply use the
flatness factor to bound the relevant moment generating function. Given the
above, inspecting the proof of Theorem $1$, the conclusions remain
asymptotically identical, as $n \rightarrow \infty$, when
\begin{equation}
\label{eq:dither-needs}
\max
\set{\epsilon_{\lat'}(\sqrt{\alpha} \sigma_s),
\epsilon_{\lat'}(\frac{\sigma_s}{1+2/\sqrt{n}})} \rightarrow 0 .
\end{equation}
Since $\alpha := \frac{\sigma_s^2}{\sigma_s^2 + \sigma_w^2} < 1$ the first term
in the maximum is asymptotically dominant. We recall that the first term helps
control the decoding error probability, using
Lemma~\ref{lem:discrete-dither-error-bnd}, and the second helps lower bound the
entropy of the coding distribution, using the lower tail bound in
Lemma~\ref{lem:concentration}. 

While the above bounds may seem sharp, we suspect that the flatness bound on the
dither distribution we require for achieving the desired decoding error
probability may be unnecessary. This is indeed true if we relax the requirement
that the decoding error be bounded by the probability that a variance
$\sigma_{\eff}^2$ Gaussian random vector falls outside the Voronoi cell of
$\lat$ (i.e.~a very fine grained requirement), to instead ask for a vanishing
error probability when the lattice $\lat$ is chosen appropriately. In
particular, it was shown in~\cite[Section C]{OE16} that a random mod-$p$ lattice
$\lat$, with any fixed normalized volume $\det(\lat)^{2/n} > 2\pi e
\sigma_\eff^2$, has vanishing decoding error whenever the equivalent error falls
outside of the $\sqrt{n}\sigma_\eff$ noise sphere with negligible probability.
Lemma~\ref{lem:discrete-dither-error-bnd} shows that the required tail bound
indeed holds for the equivalent error $\vec W_\eff$ in our setting, where $\Pr[\|\vec
W_\eff\| > \sqrt{(1+n^{-1/4})n}\sigma_\eff] \leq
e^{-\sqrt{n}/8}$ for $n \geq 16$. Note that for
such lattices, i.e.~where the error probability is essentially independent of
the dithering distribution, we can relax requirement~\ref{eq:dither-needs} to
just $\epsilon_{\lat'}(\sigma_s/(1+2/\sqrt{n})) \rightarrow 0$.  

%

We note that the fine lattice $\lat'$ will satisfy the
requirement~\ref{eq:dither-needs}, or the above relaxation, as long as it is
sufficiently dense. In particular, $\lat'$ can simply be chosen to be a
scaled-down version of $\lat$. However, the number of random bits necessary to
generate the dither distribution in $\lat'/\lat$ does depend on the properties
of $\lat'$. In particular, it is minimized if $\lat'$ is chosen from a
\textit{flatness-good} family of lattices. \cite[Theorem 1]{LLBS_12} shows that
a random mod-$p$ lattice $\lat'$ will have vanishing flatness factor
$\epsilon_{\lat'}(\sigma_{s}/(1+2/\sqrt{n}))$ for any fixed normalized volume
$V(\lat')^{2/n} < 2 \pi \sigma_s^2$ (we restrict to the relaxed requirement).
Furthermore, as explained above, a random mod-$p$ lattice $\lat$ has vanishing
error probability against sub-Gaussian effective error with parameter
$\sigma_\eff$, for any fixed normalized volume $V(\lat)^{2/n} > 2 \pi e
\sigma_{\eff}^2$. Note that if $2 \pi \sigma_s^2 > 2 \pi e \sigma_{\eff}^2
\Leftrightarrow \snr > e-1$, we can simply let $\lat' = \lat$, i.e.~no dither is
necessary (recovering the result of~\cite{LB14}). If $\snr \leq e-1$, we get a
dithering rate bounded by
\[
R' \leq \frac{1}{n} \log\frac{V(\lat)}{V(\lat')} \leq \frac{1}{2}(1-\log(1+\snr)) + \delta_n \quad \mathrm{nats/channel\text{ }use}
\]
for some $\delta_n \to 0$ as $n\to \infty$. Note that to make the bounds
rigorous, we must in fact enforce that $\lat \subseteq \lat'$. Though this is
not directly addressed above, it is easy to achieve by standard
nesting techniques (see for example~\cite{OE16}). Alternatively, as mentioned
above, one could simply use two different scalings of the same random mod-$p$
lattice to achieve the desired nesting. This will however enforce a rate of
$\log 2$, since we must scale the lattice down by an integer and $\lceil
\sqrt{e/(1+\snr)} \rceil = 2$ .   

Since it can be challenging to construct {flatness-good} families of lattices,
in practice it may be convenient to use a scaled-down version of a simple
lattice. This technique was previously used in polar lattices \cite{YanLLW14}.

{\bf Case study: polar lattices.} With discrete Gaussian shaping, polar lattices
achieve the capacity of the Gaussian channel for any SNR \cite{YanLLW14}. Here,
$\lat$ is an AWGN-good lattice from Construction D, while $\lat'=a\mathbb{Z}^n$
for some small enough scaling factor $a$. \cite{YanLLW14} uses $a=c\cdot
\sigma_{\mathrm{eff}}/\sqrt{\log n}$, for an appropriately small constant $c$, to
obtain a vanishing flatness factor $\epsilon_{\lat'}(\sigma_{\mathrm{eff}})$. In
polar lattices, a polar code is employed on each level, where random bits in the
frozen set play the role of a dither.

Since $V(\lat')=a^n$, the rate of dithering bits is bounded by
\[
R' \leq \frac{1}{n} \log\frac{V(\lat)}{V(\lat')} \leq \frac{1}{2}\log\left(\frac{2\pi e}{(1/\sqrt{\log n})^2}\right) + \delta_n = O(\log \log n).
\]
Note that basically this is due to the fact that $O(\log \log n)$ levels are
used. However, in practice, a small number of levels are enough, especially at
low SNR. Therefore, the rate of dithering bits is essentially a small constant.
We note that the dithering rate for polar lattices is higher, due to a
suboptimal $\Lambda'$.

\subsection{Peak Power}

Let $X_i$ ($i=1,2,\ldots,n$) denote the $i$-th component of the codeword
$\mathbf{X}$. With continuous Gaussian dithering, the average power
$\E[X_i^2]=\sigma_s^2$ since $X_i$ is exactly Gaussian. With discrete Gaussian
dithering, we still have $\E[X_i^2]\approx \sigma_s^2$ if
$\epsilon_{\lat'}(\sigma_s)$ is small \cite[Lemma 5]{LB14}. However, DPS may
raise the issue of high peak power since $X_i$ is random and has infinite support. The same issue arose
in \cite{QFH17} where discrete dithering was used. In \cite{QFH17} (see also
\cite{LB14}), this issue was solved by simply sending a zero vector if the
signal power is too large, i.e., if $\|\mathbf{X}\|^2 > nP$. This essentially
converts an encoding failure into a decoding error. Its drawback is that the
code will lose linearity.

For lattice Gaussian distribution $D_{\lat',\sigma_s}$, one may give a bound on
$\Pr(\|\vec X\|^2 > tn\sigma_s^2)$ for any $t\geq 1$ (see
\eqref{eq:boundGaussian}). Yet in engineering it is more reasonable to consider
the peak power of each component.  Thus we apply a modular operation to each
$X_i$ in order to limit the peak power, namely, we send $\bar{X_i} = X_i \mod B$
for certain modulus $B$, which was previously used in polar lattices
\cite{YanLLW14}. For this purpose, we need a bound on $\Pr(|X_i| >
t\sigma_s)$ for some margin $t>1$. Such margin bounds may be derived directly
from the fact that $\vec X \sim D_{\lat',\sigma_s}$ is sub-Gaussian with
parameter $\sigma_s$ (see~\cite[Lemma 2.4]{Banaszczyk95}), from which we derive
the tail bound
\begin{equation}
\label{eq:boundGaussian1D}
\Pr[|X_i| > t \sigma_s] \leq 2e^{-t^2/2}.
\end{equation}



By the union bound, we have that the probability of $\vec X$ exceeding peak
power $t^2 \sigma_s^2$ is bounded by
\begin{equation}
\Pr\left( \bigcup_{i=1}^n \{X_i^2 \geq t^2\sigma_s^2\} \right) \leq 2ne^{-t^2/2}.
\end{equation}
Thus, it is possible to choose $t = O(\sqrt{\log n})$ such that the above
probability tends to zero. Recall that in reality, we need such a bound to hold
for the coding distribution {\em after} conditioning on the value of the dither.
By Lemma~\ref{lem:discrete-sampling-lemma} (discrete sampling lemma) however, the
above probability is in fact an upper bound on the average probability that
the coding distribution exceeds the desired peak power, where the average is taken
over the choice of dither. Therefore, by Markov's inequality, with probability
at least $1-\eps$ over the choice of dither, the probability that the coding
distribution exceeds peak power $t^2\sigma_s^2$ is at most $2ne^{-t^2/2}/\eps
\rightarrow 0$, for $t = O(\sqrt{\log n})$ and any fixed $\eps \in (0,1)$.   

From the perspective of the afore-mentioned modulus, we can now
choose $B=O(\sqrt{\log n})\sigma_s$. For the modulo operation to maintain
linearity however, we must of course require that $B \mathbb{Z}^n \subset \lat$,
where $\lat$ is the coding lattice. Typical ``Construction A'' lattices (e.g. the ones in \cite{EZ04}) satisfy this requirement, as well as polar lattices (however polar lattices need a larger value of $B$, namely $B = (\log n)^{O(1)}
\sigma_s$, due to the $O(\log \log n)$ number of levels used in their construction).

\section{Converse}
\label{sec:converse}

In Section \ref{sec:AWGN-good}, we have shown that the shifted (or ``dithered'') lattice Gaussian is capacity-achieving. To close this paper, we will argue that for very low signal-to-noise ratio the shift is essentially necessary. We will prove the following ``converse'':

\begin{thm} Let $(\Lambda_n)_{n=0}^{\infty}$ be a sequence of lattice Gaussian codes, with corresponding distributions $D_{\Lambda_n,\sigma_s=\sqrt{P}}$. If $\sigma_s^2/\sigma_w^2 < 1$ and
	\begin{equation}
	\lim_{n\to \infty} \frac{1}{n}\mathbb{H}(\XX_n) = C(\text{\upshape{SNR}}),
	\end{equation}
	then the probability of error $P_e(\Lambda_n)$ of the maximum-a-posteriori decoder is bounded away from $0$.
	\label{thm:converse}
\end{thm}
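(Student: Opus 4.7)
The plan is to show that whenever the transmitted codeword $\xx \in \Lambda_n$ is nonzero, the origin $\mathbf{0}$ is a competitor that the MAP decoder prefers with probability at least $1/2$. Combining this with a uniform upper bound on $\Pr[\XX_n = \mathbf{0}]$ forced by the capacity-achieving hypothesis will yield $P_e(\Lambda_n) \geq 1/2 - o(1)$.

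First, by Lemma~\ref{lem:MAP}, the MAP decoder minimizes $\|\alpha\yy - \xx'\|$ over $\xx' \in \Lambda_n$, where $\alpha = \sigma_s^2/(\sigma_s^2+\sigma_w^2) < 1/2$ since $\snr < 1$. A necessary condition for $g(\xx+\WW) = \xx$ is that $\alpha\yy$ lies at least as close to $\xx$ as to $\mathbf{0}$, i.e.\ $\|\alpha(\xx+\WW) - \xx\|^2 \leq \|\alpha(\xx+\WW)\|^2$; expanding and cancelling $\alpha^2\|\yy\|^2$ yields
\[
(1-2\alpha)\|\xx\|^2 \leq 2\alpha\,\pr{\WW}{\xx}.
\]
Because $1-2\alpha > 0$ and $\|\xx\|^2 > 0$, the right-hand side must be strictly positive; as $\pr{\WW}{\xx} \sim \mathcal{N}(0, \sigma_w^2 \|\xx\|^2)$ is a centered nondegenerate Gaussian, this event occurs with probability strictly less than $1/2$. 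Therefore $\Pr[g(\xx+\WW) = \xx \mid \XX_n =\xx] < 1/2$ for every $\xx \in \Lambda_n \setminus \set{\mathbf{0}}$, and averaging over the coding distribution gives
\[
P_e(\Lambda_n) \geq \tfrac{1}{2}\bigl(1 - \Pr[\XX_n = \mathbf{0}]\bigr).
\]

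Second, because the unshifted lattice Gaussian has mass proportional to $e^{-\|\cdot\|^2/2\sigma_s^2}$, the origin is the unique mode of $\XX_n$, so $\Pr[\XX_n = \mathbf{0}]$ equals the largest atom $p_{\max}$ of the distribution. Combining with the elementary bound $\mathbb{H}(\XX_n) = \mathbb{E}[-\log p(\XX_n)] \geq -\log p_{\max}$ gives
\[
\Pr[\XX_n = \mathbf{0}] \leq e^{-\mathbb{H}(\XX_n)}.
\]
Under the hypothesis $\mathbb{H}(\XX_n)/n \to C(\snr) > 0$, the right-hand side decays exponentially in $n$, so $\Pr[\XX_n = \mathbf{0}] \to 0$ and hence $P_e(\Lambda_n) \to 1/2$, which is certainly bounded away from zero.

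There is essentially no serious obstacle: the whole argument is driven by the geometric observation that MMSE shrinkage by a factor smaller than $1/2$ places the noiseless image $\alpha\xx$ closer to $\mathbf{0}$ than to $\xx$, so the origin is always a competitive candidate under MAP decoding. The only point requiring care is that ``MAP prefers $\xx$ over $\mathbf{0}$'' is a genuine necessary condition for correct decoding, which is immediate since the MAP output minimizes over all of $\Lambda_n$ and therefore cannot be farther from $\alpha\yy$ than $\mathbf{0}$ is; ties have probability zero under the continuous Gaussian noise.
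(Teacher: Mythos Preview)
Your first step, deriving $P_e(\Lambda_n) \geq \tfrac{1}{2}(1-P_0(\Lambda_n))$ by comparing the transmitted point $\xx$ against the competitor $\mathbf{0}$ under MAP decoding, is correct and considerably more direct than the paper's route. The paper reaches exactly the same inequality, but via the explicit density of the effective noise (Lemma~\ref{lem:convolution}), a decomposition of $f_{\sigma_s}(\Lambda_n)$ into three integrals over the Voronoi cell, and a swap argument exploiting $\sqrt{\alpha}\sigma_s < \sigma_{\eff}$. Your geometric observation that the MMSE factor $\alpha<1/2$ pulls $\alpha\xx$ closer to the origin than to $\xx$ is a genuinely simpler way to the same bound.

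However, your second step contains a real error. From the elementary bound $\mathbb{H}(\XX_n)=\E[-\log p(\XX_n)]\geq -\log p_{\max}$ you conclude $p_{\max}\leq e^{-\mathbb{H}(\XX_n)}$, but the inequality goes the other way: $\mathbb{H}(\XX_n)\geq -\log p_{\max}$ is equivalent to $p_{\max}\geq e^{-\mathbb{H}(\XX_n)}$. In general, large Shannon entropy does \emph{not} force the largest atom to be small; a distribution can place mass $1-\varepsilon$ at a single point while spreading the remaining $\varepsilon$ over exponentially many points and still have linear entropy in $n$. So you cannot conclude $P_0(\Lambda_n)\to 0$ (or even $P_0$ bounded away from $1$) from the entropy hypothesis by min-entropy reasoning alone.

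This is exactly where the paper invokes structure specific to the discrete Gaussian: Lemma~\ref{lemma:entropy} uses the identity $\mathbb{H}(\XX_n)=-\log P_0+\tfrac{1}{2\sigma_s^2}\E[\|\XX_n\|^2]$ together with the tail bound $\Pr[\|\XX_n\|^2\geq 2nt\sigma_s^2]\leq e^{-nt+\frac{n}{2}\log(2te)}$ to show
\[
\tfrac{1}{n}\mathbb{H}(\XX_n)\;\leq\; -\tfrac{1}{n}\log P_0(\Lambda_n)+\pi\,(1-P_0(\Lambda_n))+o(1).
\]
This inequality points in the direction you need: if $P_0(\Lambda_n)\to 1$ then $\tfrac{1}{n}\mathbb{H}(\XX_n)\to 0$. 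Combined with your (correct) inequality $P_e\geq\tfrac{1}{2}(1-P_0)$, the contrapositive finishes the proof. So your argument can be repaired by replacing the flawed min-entropy step with Lemma~\ref{lemma:entropy}; the first half of your proof remains a cleaner alternative to the paper's derivation of the same key inequality.
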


Before exhibiting the proof, we provide a heuristic argument that justifies why Theorem \ref{thm:converse} should be true for random lattices. The average of the Gaussian sum $f_{\sigma_s}(\Lambda)$ over a typical ``random'' lattice of volume $V$ satisfies (e.g. \cite[Lemma 3]{LLBS_12}):
\begin{equation}\label{eq:Gaussian-mean}
\E \left[f_{\sigma_s}(\Lambda)\right] = (2\pi\sigma_s)^{-n/2} + V^{-1} \int f_{\sigma}(\xx) {\rm dx}
= (2\pi\sigma_s)^{-n/2} + V^{-1}.
\end{equation}

 For a sequence of lattices to have vanishing probability of error, the Poltyrev limit for the best NLD \eqref{eq:poltyrevLimit} implies that $V(\Lambda)^{2/n} > (2\pi e) \sigma_{\text{eff}}^2$ where $\sigma_{\text{eff}}^2 = \sigma_s^2 \sigma_w^2/(\sigma_s^2+\sigma_w^2)$ is the effective noise power. Under this condition, the typical distribution $D_{\Lambda,\sigma_s}$ of a random lattice has essentially all mass is concentrated in the zero vector. Indeed, we have:
\begin{equation*}
\begin{split}
\E[P(\XX=0)]&=\E[(\sqrt{2\pi\sigma_s^2})^{-n} /f_{\sigma_s}(\Lambda)]\geq (\sqrt{2\pi\sigma_s^2})^{-n}/\E[f_{\sigma_s}(\Lambda)] \\
&\stackrel{(a)}{=}  (\sqrt{2\pi\sigma_s^2})^{-n} \left(\frac{(\sqrt{2\pi \sigma_s^2})^n}{V(\Lambda)^{-1} (\sqrt{2\pi \sigma_s^2})^n+1}\right) \geq \frac{1}{\left(\frac{1}{e}\left(1+\frac{\sigma_s^2}{\sigma_w^2}\right)\right)^{\frac{n}{2}}+1} \to 1,
\end{split}
\end{equation*}
where (a) is due to \eqref{eq:Gaussian-mean} and the last limit holds for $\snr < e-1$. In this case, it is not hard to see that the entropies of $\mathcal{D}_{\Lambda,\sigma_s}$ necessarily tends to zero, therefore no positive rate is achievable. Following this intuition, we prove Theorem \ref{thm:converse} by showing that a centered lattice Gaussian with vanishing error probability has essentially all mass concentrated in the origin.

We recall the definition of the effective noise $\vec{W}_{\eff} = (\alpha-1) \vec{X} + \alpha \vec{W}$, where $\alpha = \sigma_s^2/(\sigma_s^2+\sigma_w^2)$ is the MMSE coefficient. Following \cite[Claim 3.9]{Regev05}, we obtain the distribution of $\ww_{\eff}$ in the following lemma. We give a proof in the Appendix for completeness.
\begin{lem}
	The probability density function $g$ of the effective noise $\vec{W}_{\eff}$ is given by:
	$$g(\ww_{\eff}) = \frac{e^{\frac{-\left\|\ww_{\eff}\right\|^{2}}{2 \sigma_{\eff}^2}}}{\left(\sqrt{2\pi}\sigma_{\eff}\right)^{n}} \frac{f_{\sqrt{\alpha} \sigma_s}(\Lambda+\ww_{\eff})}{f_{\sigma_s}(\Lambda)},$$
	where $\sigma_{\eff}^2 = \sigma_w^2 \sigma_s^2/(\sigma_w^2+\sigma_s^2)$ is the variance of $\ww_{\eff}$.
	\label{lem:convolution}
\end{lem}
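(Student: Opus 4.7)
My plan is to compute the density of $\vec{W}_{\eff}=(\alpha-1)\vec{X}+\alpha\vec{W}$ by conditioning on $\vec{X}$ and marginalizing, then to complete the square to separate the $\vec{w}_{\eff}$-dependence from the lattice sum. First I would write, for any test set $A\subseteq \R^n$,
\[
g(\vec{w}_{\eff})=\sum_{\vec{x}\in\Lambda}\Pr[\vec{X}=\vec{x}]\,\varphi_{\alpha^2\sigma_w^2}\bigl(\vec{w}_{\eff}-(\alpha-1)\vec{x}\bigr),
\]
where $\varphi_{\tau^2}$ denotes the density of $\mathcal{N}(\vec{0},\tau^2\I_n)$ and $\Pr[\vec{X}=\vec{x}]=f_{\sigma_s}(\vec{x})/f_{\sigma_s}(\Lambda)$; this uses that given $\vec{X}=\vec{x}$, the random vector $\vec{W}_{\eff}$ is Gaussian with mean $(\alpha-1)\vec{x}$ and covariance $\alpha^2\sigma_w^2\I_n$ (because $\vec{W}$ is independent of $\vec{X}$).

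Next I would add the two exponents $-\|\vec{x}\|^2/(2\sigma_s^2)-\|\vec{w}_{\eff}-(\alpha-1)\vec{x}\|^2/(2\alpha^2\sigma_w^2)$ and complete the square in $\vec{x}$. Writing $\beta=1-\alpha$ and using the key identities
\[
\sigma_{\eff}^2=\beta\sigma_s^2,\qquad \alpha^2\sigma_w^2=\alpha\sigma_{\eff}^2,\qquad \tfrac{1}{\sigma_s^2}+\tfrac{\beta^2}{\alpha\sigma_{\eff}^2}=\tfrac{1}{\alpha\sigma_s^2},
\]
the quadratic form in $\vec{x}$ collapses to $\|\vec{x}+\vec{w}_{\eff}\|^2/(2\alpha\sigma_s^2)$ and the residual term in $\vec{w}_{\eff}$ alone simplifies (using $\tfrac{1}{\sigma_s^2}-\tfrac{1}{\sigma_{\eff}^2}=-\tfrac{\alpha}{\sigma_{\eff}^2}$) to $-\|\vec{w}_{\eff}\|^2/(2\sigma_{\eff}^2)$. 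I therefore obtain
\[
g(\vec{w}_{\eff})=\frac{1}{f_{\sigma_s}(\Lambda)}\cdot\frac{e^{-\|\vec{w}_{\eff}\|^2/(2\sigma_{\eff}^2)}}{(2\pi)^{n/2}\,(\sigma_s\cdot\alpha\sigma_w)^n}\sum_{\vec{x}\in\Lambda}e^{-\|\vec{x}+\vec{w}_{\eff}\|^2/(2\alpha\sigma_s^2)}.
\]

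Finally I would reconcile the normalizations: the identity $\sigma_s\cdot\alpha\sigma_w=\sqrt{\alpha}\sigma_s\cdot\sigma_{\eff}$ (immediate from $\alpha^2\sigma_w^2=\alpha\sigma_{\eff}^2$) lets me split $(2\pi)^{n/2}(\sigma_s\cdot\alpha\sigma_w)^n=(\sqrt{2\pi\alpha\sigma_s^2})^n(\sqrt{2\pi}\sigma_{\eff})^n$, so that the lattice sum becomes exactly $f_{\sqrt{\alpha}\sigma_s}(\Lambda+\vec{w}_{\eff})$, producing the claimed formula. I expect the main (purely mechanical) obstacle to be bookkeeping the variance identities during the square-completion step; once those are in place, everything else is a direct rearrangement and a check that $\vec{W}_{\eff}$ has the stated variance $\sigma_{\eff}^2=(1-\alpha)^2\sigma_s^2+\alpha^2\sigma_w^2$.
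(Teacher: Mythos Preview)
Your proposal is correct and follows essentially the same approach as the paper's proof: write $g(\vec{w}_{\eff})$ as a lattice sum by conditioning on $\vec{X}$, then use the quadratic identity (completing the square) to separate the $\vec{w}_{\eff}$-only Gaussian factor from the shifted lattice Gaussian sum, and finally match normalizing constants via the relation $\alpha^2\sigma_w^2=\alpha\sigma_{\eff}^2$. The paper phrases the same computation by first passing to $\tilde{\vec{X}}=(\alpha-1)\vec{X}$ and then invoking the identity $\|\vec{x}\|^2/(2\sigma_s^2)+\|\vec{w}_{\eff}-(\alpha-1)\vec{x}\|^2/(2\alpha^2\sigma_w^2)=\|\vec{w}_{\eff}\|^2/(2\sigma_{\eff}^2)+\|\vec{x}+\vec{w}_{\eff}\|^2/(2\alpha\sigma_s^2)$, which is exactly your square-completion step.
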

\begin{lem}[Relation between entropy and the mass of zero]
	\label{lem:entropy} Let $P_0(\Lambda)$ denote the probability that $\XX = 0$, where $\XX \sim \mathcal{D}_{\Lambda,\sigma_s}$. We have
	$$\frac{1}{n}\mathbb{H}(\XX) \leq -\frac{1}{n}\log(P_0(\Lambda)) + \pi (1-P_0(\Lambda)) + \frac{1.8 e^{-1.7n}}{n} .$$
	\label{lemma:entropy}
\end{lem}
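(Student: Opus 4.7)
The plan is to first rewrite the entropy in an analytically convenient form, and then control the resulting second moment via Banaszczyk's Gaussian tail bound applied at the scale $R := \sqrt{2\pi n}\,\sigma_s$.

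To start, I would exploit the explicit form of the mass function of $\mathcal{D}_{\Lambda,\sigma_s}$. Since $\Pr[\XX=\xx] = f_{\sigma_s}(\xx)/f_{\sigma_s}(\Lambda)$, one has $-\log \Pr[\XX=\xx] = -\log P_0(\Lambda) + \|\xx\|^2/(2\sigma_s^2)$. Averaging under $\XX \sim \mathcal{D}_{\Lambda,\sigma_s}$ yields the clean identity
\[
\mathbb{H}(\XX) \;=\; -\log P_0(\Lambda) \;+\; \frac{\E[\|\XX\|^2]}{2\sigma_s^2},
\]
reducing the lemma to proving $\E[\|\XX\|^2]/(2n\sigma_s^2) \leq \pi(1-P_0(\Lambda)) + 1.8\,e^{-1.7n}/n$.

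I would then split the expected squared norm at the radius $R$. A layer-cake computation on $\E[\|\XX\|^2 \mathbf{1}_{\|\XX\|>R}]$, combined with the crude estimate $\sum_{\xx\neq \mathbf{0},\|\xx\|\leq R}\|\xx\|^2 p(\xx) \leq R^2\,\Pr[\XX\neq \mathbf{0},\|\XX\|\leq R]$, gives
\[
\E[\|\XX\|^2] \;\leq\; R^2\,(1-P_0(\Lambda)) \;+\; \int_{R^2}^{\infty} \Pr[\|\XX\|>\sqrt{t}]\,dt,
\]
where the $R^2(1-P_0)$ bundles together the low-norm contribution and the $R^2\,\Pr[\|\XX\|>R]$ produced at the boundary of the tail decomposition. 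Dividing by $2n\sigma_s^2$, the first summand reproduces exactly the $\pi(1-P_0(\Lambda))$ term of the lemma.

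Finally, I would bound the residual tail integral using Banaszczyk's discrete Gaussian tail bound \cite[Lemma 1.5]{Banaszczyk93}: $\Pr[\|\XX\|>r\sqrt{n}\,\sigma_s] \leq (r\sqrt{e}\,e^{-r^2/2})^n$ for $r\geq 1$. Substituting $u = r^2/2$ rewrites the integral as $2n\sigma_s^2 (2e)^{n/2}\int_\pi^\infty u^{n/2}e^{-un}\,du$, and noticing that the logarithmic derivative $n/(2u) - n$ of the integrand is at most $-(2\pi-1)n/(2\pi) < -0.84\,n$ throughout $u \geq \pi$ yields the Laplace-type estimate $\int_\pi^\infty u^{n/2}e^{-un}\,du \leq \pi^{n/2}e^{-\pi n}/(0.84\,n)$. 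Multiplying out and using $(2\pi e)^{n/2}e^{-\pi n} = e^{-1.7227\,n}$ gives a tail contribution to $\E[\|\XX\|^2]/(2n\sigma_s^2)$ of at most $e^{-1.7227\,n}/(0.84\,n) \leq 1.8\,e^{-1.7\,n}/n$, closing the proof.

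The main obstacle is securing the $1/n$ factor in the error term: a crude bound that only uses the endpoint value of the Gaussian tail would produce $O(e^{-1.7\,n})$ without the $1/n$, which would exceed $1.8\,e^{-1.7\,n}/n$ for small $n$. The Laplace-type estimate above, exploiting the strong exponential decay of $u^{n/2}e^{-un}$ beyond $u=\pi$, is precisely what yields the extra $1/n$ factor; the gap between the exponent $-1.7227$ coming from $(2\pi e)^{n/2}e^{-\pi n}$ and the target $-1.7$ in the lemma then leaves just enough slack to absorb the $1/0.84 \approx 1.19$ prefactor into the stated constant $1.8$.
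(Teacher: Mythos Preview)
Your proof is correct and follows essentially the same route as the paper: both start from the entropy identity $\mathbb{H}(\XX) = -\log P_0(\Lambda) + \E[\|\XX\|^2]/(2\sigma_s^2)$, split the layer-cake representation of the second moment at the threshold corresponding to $t=\pi$ (i.e.\ $R=\sqrt{2\pi n}\,\sigma_s$), bound the low-norm part by $(1-P_0)$, and control the tail via the same discrete-Gaussian bound $\Pr[\|\XX\|^2 \geq 2tn\sigma_s^2] \leq e^{-nt + (n/2)\log(2te)}$. The only cosmetic difference is that the paper states the bound for a generic cutoff $\alpha$ and linearizes the tail via $\phi(t)\geq\phi(\alpha)$, whereas you linearize via the log-derivative at $u=\pi$; your Laplace-type estimate actually yields a slightly smaller prefactor ($1/0.84$ versus the paper's $1/\phi(\pi)\approx 1.82$), but both comfortably fit under the stated $1.8\,e^{-1.7n}/n$.
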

\begin{proof}
	We will show the more general bound
	\begin{equation}
	\frac{1}{n}\mathbb{H}(\XX) \leq -\frac{1}{n}\log P_0(\Lambda) + {\alpha} (1-P_0(\Lambda)) + \frac{1}{n\phi(\alpha)} e^{-\alpha \phi(\alpha) n},
	\label{eq:boundEntropy}
	\end{equation}
	for $\phi(\alpha) = 1 - (1/2\alpha)\log( 2\alpha e)$ and any $\alpha > 1$.  The lemma will follow by taking $\alpha = \pi$.
	
	From the definition of entropy,
	\begin{equation}\frac{1}{n}\mathbb{H}(\XX) = \frac{1}{n}\log\left ( (\sqrt{2\pi} \sigma_s)^n f_{\sigma_s}(\Lambda)\right)+ \frac{1}{2n\sigma_s^2}E\left[\left\| {\XX} \right\|^2 \right].
	\label{eq:defEntropy}
	\end{equation}
	First notice that $P_0(\Lambda) = ((\sqrt{2\pi} \sigma_s)^n f_{\sigma_s}(\Lambda))^{-1 }$, therefore the terms inside the logarithms in Eqs. \eqref{eq:boundEntropy} and \eqref{eq:defEntropy} coincide. Now using \cite[Lemma 2.13]{DRS14} (cf. \cite[Lemma 8]{LB14}), for $t \geq 1$, we have
\begin{equation}\label{eq:boundGaussian}
\Pr\left(\frac{1}{2n\sigma_s^2}\left\| \XX\right\|^2 \geq t \right) \leq e^{-nt + \frac{n}{2} \log (2t e )}.
\end{equation}
	Moreover $\Pr\left(\left\| \XX\right\|^2 \geq 2nt\sigma_s^2 \right) \leq 1-P_0(\Lambda)$. Therefore we obtain bound
	\begin{equation}\begin{split}E\left[\left\| {\XX}/\sqrt{2n\sigma_s^2} \right\|^2 \right] &= \int_{0}^{\infty } \Pr\left(\frac{1}{2n\sigma_s^2}\left\| \XX \right\|^2 \geq t \right) dt\\
&\leq (1-P_0(\Lambda))\alpha + \int_{\alpha}^{\infty } \Pr\left(\frac{1}{2n\sigma_s^2}\left\| \XX \right\|^2 \geq t \right) dt \\
	&\leq (1-P_0(\Lambda))\alpha + \int_{\alpha }^{\infty } e^{-nt \phi(\alpha)} dt
	\end{split}
	\end{equation}
	Evaluating the integral gives us \eqref{eq:boundEntropy}.
	
\end{proof}

\indent \textbf{Probability of Error Analysis.}
Lemma \ref{lem:convolution} allows us to relate the probability that the effective noise lies outside the Voronoi cell of a lattice and the Gaussian mass of the point $\mathbf{0} \in \Lambda$. For this purpose we first note that Lemma \ref{lem:convolution} implies the following relations:
\begin{equation}
f_{\sigma_s}(\Lambda) = \int_{\mathbb{R}^n} \frac{e^{-\left\| \ww \right\|^2/2{\sigma_{\eff}}^2}}{(\sqrt{2 \pi} \sigma_{\eff})^n} f_{\sqrt{\alpha}\sigma_s}(\Lambda + \ww) d\ww = \int_{\mathcal{V}(\Lambda)}  f_{{\sigma_{\eff}}}(\Lambda + \ww) f_{\sqrt{\alpha}\sigma_s
}(\Lambda + \ww) d\ww.
\label{eq:DDMagic}
\end{equation}
and
$$P_e(\Lambda) f_{\sigma_s}(\Lambda) = \int_{\mathcal{V}(\Lambda)}f_{\sigma_\eff}(\Lambda \nozero + \ww)  f_{\sqrt{\alpha}\sigma_s}(\Lambda + \ww) d\ww.$$

\noindent Recalling that $P_0(\Lambda) =((\sqrt{2 \pi}\sigma_s)^n f_{\sigma_s}(\Lambda))^{-1}$, the above equality indeed relates the probability of the discrete Gaussian hitting zero and the probability of error. We will first bound $f_{\sigma_s}(\Lambda)$ which will imply a bound on $P_0(\Lambda)$ and on the entropy, by Lemma \ref{lemma:entropy}, showing the assertion in Theorem \ref{thm:converse}.

\textit{Proof of Theorem \ref{thm:converse}:}  We have
\begin{equation}\begin{split}f_{\sigma_s}(\Lambda_n) &= \int_{\mathcal{V}(\Lambda_n)}  f_{{\sigma_{\eff}}}(\Lambda_n\nozero + \ww) f_{\sqrt{\alpha}\sigma_s
}(\Lambda_n + \ww) d\ww +  \int_{\mathcal{V}(\Lambda_n)}  f_{{\sigma_{\eff}}}(\ww) f_{\sqrt{\alpha}\sigma_s
}(\ww) d\ww \\
&+ \int_{\mathcal{V}(\Lambda_n)} f_{{\sigma_{\eff}}}(\ww)f_{\sqrt{\alpha}\sigma_s
}(\Lambda_n\nozero + \ww) d\ww .
\end{split}
\label{eq:f1}
\end{equation}
We proceed to bound the three terms on the right-hand side of \eqref{eq:f1}. The first term is equal to $P_e(\Lambda_n)f_{\sigma_s}(\Lambda_n)$ while the second term satisfies
$$\int_{\mathcal{V}(\Lambda_n)}  f_{{\sigma_{\eff}}}(\ww) f_{\sqrt{\alpha}\sigma_s
}(\ww) d\ww \leq \int_{\mathbb{R}^n}  f_{{\sigma_{\eff}}}(\ww) f_{\sqrt{\alpha}\sigma_s
}(\ww) d\ww = \frac{1}{(\sqrt{2\pi}\sigma_s)^n}.$$
Noting that $\sqrt{\alpha}\sigma_s= \sigma_{\eff}(\sigma_s/\sigma_w)$ and using the assumption $\sigma_s/\sigma_w < 1$, the last term can be upper bounded as
\begin{equation}
\begin{split}
\int_{\mathcal{V}(\Lambda_n)}  f_{{\sigma_{\eff}}}(\ww) f_{\sqrt{\alpha}\sigma_s
}(\Lambda_n\nozero + \ww) d\ww \leq \int_{\mathcal{V}(\Lambda_n)}  f_{{\sigma_{\eff}}}(\Lambda_n\nozero + \ww) f_{\sqrt{\alpha}\sigma_s
}(\ww) d\ww \leq P_e(\Lambda_n) f_{\sigma_s}(\Lambda_n).
\end{split}
\end{equation}
Combining altogether, we obtain the bound
$$f_{\sigma_s}(\Lambda_n) \leq 2 P_e(\Lambda_n) f_{\sigma_s}(\Lambda_n) + \frac{1}{(\sqrt{2\pi}\sigma_s)^n} \Rightarrow  P_e(\Lambda_n) \geq \frac{1}{2} (1 - P_0(\Lambda_n)).$$
This implies in turn that, if the probability of error $P_e(\Lambda_n) \to 0$, then $P_{0}(\Lambda_n) \to 1$ and, from Lemma \ref{lemma:entropy}, $(1/n)\mathbb{H}(\XX_n) \to 0$. Conversely, if we force $(1/n)\mathbb{H}(\xx)$ to tend to a positive value, $P_{0}(\Lambda_n)$ is bounded away from one, and therefore $P_e(\Lambda_n)$ is bounded away from zero. \qed

\section{Conclusion}
\label{sec:conclusion}
In this paper we have shown that DPS can convert any lattice which is good for the unconstrained AWGN channel into a good code in the power-constrained setting.
For instance, any sphere-bound achieving lattice in the sense of \cite{Forney00} can be coupled with our results in order to produce capacity-achieving codes. We stress the fact that previous schemes in the literature strictly need extra conditions other than AWGN-goodness, such as flatness-goodness or quantization-goodness.

We have further demonstrated the efficacy of discrete dithering in place of continuous Gaussian dithering. Optimizing the rate of random bits in discrete dithering for a broad range of dimensions and rates are an interesting further research direction. Improving the second-order analysis in order to achieve the right dispersion for all SNRs is a further point of interest and left as an open problem.

Finally, although the heuristic argument exhibited in Section \ref{sec:converse} reveals that the centered distribution should fail to achieve capacity for $\snr < e-1$, the actual proof only holds for $\snr < 1$, leaving inconclusive the values $\snr \in  [1,e-1]$. Furthermore, the scheme \cite{LB14} fixes the variance parameter $\sigma_s^2 = P$ a priori, whereas one could potentially achieve better rates by choosing $\sigma_s$ adaptively depending on the dimension. A stronger converse that can handle varying $\sigma_s$ would strengthen our results. We believe that this would require completely new arguments.

\section*{Acknowledgements}
This work was conceived during a visit of the first author (AC) to the second
one (DD), enabled by the NWO Veni grant 639.071.510. DD would like to thank
Divesh Aggarwal for useful discussions. The authors would like to thank Ram Zamir for clarifying the
role of dither in Voronoi constellations, Yair Yona for kindly providing the manuscript \cite{YH17} and the anonymous reviewers for their comments and suggestions.

\appendix
\label{sec:appendix}
In this appendix we prove Lemma \ref{lem:convolution}. Recall the MMSE coefficient $\alpha$ and the effective noise parameter $\sigma_\eff^2$:
$$\alpha = \frac{\sigma_s^2}{\sigma_s^2+\sigma_w^2}, \,\,\,\,\, \sigma_{\eff}^2 = \sigma_w^2 \frac{\sigma_s^2}{\sigma_w^2+\sigma_s^2} = \alpha \sigma_w^2.$$

Let $\tilde{\XX} = (\alpha-1) \XX$, where $\XX \sim D_{\lat,\sigma_s}$, then
we have that $\tilde{\XX} \sim D_{(1-\alpha)\lat, (1-\alpha)\sigma_s}$.
Indeed, the probability of picking $\tilde{\xx} \in (\alpha-1) \lat$, noting that $1-\alpha \in (0,1)$, is given by
$$ \frac{f_{\sigma_s(1-\alpha)}(\tilde{\xx})}{f_{\sigma_s(1-\alpha)}( (1-\alpha)
\lat)} = \frac{f_{\sigma_s}(\tilde{\xx}/(\alpha-1))}{f_{\sigma_s}(\lat)} . $$
Let $\tilde{\WW} = \alpha \WW \sim \mathcal{N}(0,\alpha^2 \sigma_s^2)$, and $\WW_{\eff} = \tilde{\XX} + \tilde{\WW}$. The distribution of the continuous variable $\WW_{\eff}$ is given by the convolution of the distributions of $\tilde{\XX}$ and $\tilde{\WW}$, namely
\begin{equation}\begin{split} g(\ww_{\eff}) &= \sum_{\tilde{\xx} \in
(\alpha-1)\lat} \frac{f_{\sigma_s}(\tilde{\xx}/(\alpha-1))}{f_{\sigma_s}( \lat)} f_{\sigma_w \alpha}(\ww_{\eff} - \tilde{\xx}) \\
&= \frac{1}{(\sqrt{2\pi \sigma_w^2 \alpha^2})^n} \times \frac{1}{(\sqrt{2\pi\sigma_s^2})^n f_{\sigma_s}(\lat)} \times \sum_{\tilde{\xx} \in \lat} e^{- \norm{\tilde{\xx}}^2/2\sigma_s^2 - \norm{\ww_{\eff} - (\alpha-1) \tilde{\xx}}^2/2\alpha^2\sigma_w^2}.
\\
& = \frac{1}{(\sqrt{2\pi \sigma_\eff^2})^n} \times \frac{1}{(\sqrt{2\pi\alpha \sigma_s^2})^n f_{\sigma_s}(\lat)} \times \sum_{\tilde{\xx} \in \lat} e^{- \norm{\tilde{\xx}}^2/2\sigma_s^2 - \norm{\ww_{\eff} - (\alpha-1) \tilde{\xx}}^2/2\alpha^2\sigma_w^2}
\end{split}
\end{equation}
To evaluate the exponents in the last expression we use the identity

\begin{equation*}
\frac{\norm{\tilde{\xx}}^2}{2\sigma_s^2} + \frac{\norm{\ww_{\eff} - (\alpha-1) \tilde{\xx}}^2}{2\alpha^2\sigma_w^2} = \frac{\norm{\ww_\eff}^2}{2 \sigma_\eff^2} + \frac{\norm{\tilde{\xx} +\ww_\eff}^2}{2\alpha \sigma_s^2}
\end{equation*}
which gives 
\begin{equation}
g(\ww_{\eff}) = \frac{e^{-\frac{\norm{\ww_\eff}^2}{2 \sigma_\eff^2}}}{(\sqrt{2\pi \sigma_\eff^2})^n} \times \frac{1}{(\sqrt{2\pi\alpha \sigma_s^2})^n f_{\sigma_s}(\lat)} \times \sum_{\tilde{\xx} \in \lat+ \ww_\eff} e^ {-\frac{\norm{\tilde{\xx}}^2}{2\alpha \sigma_s^2}}.
\end{equation}
This last expression for $g(\ww_\eff)$ coincides with the one in Lemma \ref{lem:convolution}.

\bibliographystyle{plain}
\bibliography{fix}
\end{document}